\newcommand{\func}[1]{\textsc{#1}} %
\newcommand{\abs}[1]{\left| #1\right|}
\newcommand{\set}[1]{\left\{ #1\right\}}
\newcommand{\sodass}{\,:\,}
\newcommand{\setGilt}[2]{\left\{ #1\sodass #2\right\}}
\newcommand{\natnull}{\mathbb{N}_{0}}
\newcommand{\realrange}[2]{\left[#1, #2\right]}
\newcommand{\unitrange}[2]{\realrange{0}{1}}
\newcommand{\llabel}[1]{\label{\labelprefix:#1}}
\newcommand{\labelprefix}{} %
\newcommand{\discussionsize}{\small}
\newenvironment{code}{\noindent%
\begin{tabbing}%
\hspace{2em}\=\hspace{2em}\=\hspace{2em}\=\hspace{2em}\=\hspace{2em}\=%
\hspace{2em}\=\hspace{2em}\=\hspace{2em}\=\hspace{2em}\=\hspace{2em}\=%
\kill}{\end{tabbing}}
\newcommand{\labelcommand}{}
\newsavebox{\codeparam}
\newcounter{lineNumber}
\newenvironment{disscodepos}[3]{%
\renewcommand{\labelcommand}{#2}%
\renewcommand{\captiontext}{#3}%
\sbox{\codeparam}{\parbox{\textwidth}{#3}}%
\begin{figure}[#1]\begin{center}\begin{code}\setcounter{lineNumber}{1}}{%
\end{code}\end{center}\caption{\llabel{\labelcommand}\captiontext}\end{figure}}
\newdimen\endofsize\endofsize=0.5em
\def\endofbeweis{~\quad\hglue\hsize minus\hsize
                 \hbox{\vrule height \endofsize width
\endofsize}\par}
\definecolor{lightergray}{rgb}{0.86, 0.86, 0.86}
\definecolor{red}{rgb}{0.698, 0.133, 0.133}
\definecolor{infocolor}{rgb}{0.6,0.6,0.6}
\newcommand{\naive}{\textsc{NaiveDynOpt}}
\newcommand{\strong}{\textsc{StrongDynOpt}}
\newcommand{\imp}{\textsc{ImprovedDynOpt}}
\newcommand{\ori}[1]{\overline{#1}}
\DeclareMathOperator{\arboricity}{\alpha}
\newcommand{\ie}{i.\,e.,\xspace}
\newcommand{\etal}{et~al.~}
\def\comment#1{}
\def\withcomments{
  \newcounter{mycommentcounter}
   \def\comment##1{\refstepcounter{mycommentcounter}%
    \ifhmode%
     \unskip%
     {\dimen1=\baselineskip \divide\dimen1 by 2 %
       \raise\dimen1\llap{\tiny\bfseries \textcolor{red}{-\themycommentcounter-}}}\fi%
     \marginpar[{\renewcommand{\baselinestretch}{0.8}%
       \hspace*{3em}\begin{minipage}{5em}\footnotesize [\themycommentcounter]: \raggedright ##1\end{minipage}}]{\renewcommand{\baselinestretch}{0.8}%
       \begin{minipage}{5em}\footnotesize [\themycommentcounter]: \raggedright ##1\end{minipage}}}
  }
\definecolor{darkgreen}{RGB}{0,200,100}
\definecolor{orange}{RGB}{255,80,0}
\definecolor{lightseagreen}{rgb}{0.13, 0.7, 0.67}
\newcommand{\Xcomment}[1]{}
\newcommand{\odeg}[2]{\func{odeg}(#1, #2)}
\newcommand{\preG}{\overline{G}_i}
\newcommand{\postG}{\overline{G}_{i+1}}
\newcommand{\midG}{\tilde{G}_{i}}
\newtheorem{invariant}{Invariant}
\newcommand{\mytitle}{Engineering Fully Dynamic Exact $\Delta$-Orientation Algorithms}
\begin{document}

\title{\Large \mytitle}
\author{Ernestine Großmann, Henrik Reinstädtler, Christian Schulz, Fabian Walliser} 
\date{}

\maketitle

\fancyfoot[R]{\scriptsize{Copyright \textcopyright\ 2024 by E. Großmann, H. Reinstädtler, C. Schulz and F. Walliser}}

\date{}

\maketitle
\begin{abstract}
A (fully) dynamic graph algorithm is a data structure that supports edge insertions, edge deletions, and answers specific queries pertinent to the problem at hand. In this work, we address the fully dynamic edge orientation problem, also known as the fully dynamic $\Delta$-orientation problem. The objective is to maintain an orientation of the edges in an undirected graph such that the out-degree of any vertex remains low. When edges are inserted or deleted, it may be necessary to reorient some edges to prevent vertices from having excessively high out-degrees. In this paper, we introduce the first algorithm that maintains an optimal edge orientation during both insertions and deletions.
In experiments comparing with recent nearly exact algorithms, we achieve a 32\% lower running time.
The update time of our algorithm is up to 6 orders of magnitude faster than static exact algorithms.
\end{abstract}

\section{Introduction}
\label{sec:introduction}

Complex graphs have a wide range of applications, from technological networks to biological systems like the human brain. These graphs can contain billions of vertices and edges, and analyzing them often provides valuable insights. In practice, the structure of these graphs frequently changes over time, with vertices and edges being added or removed. For instance, in a social network, users join or leave, and their connections may form or dissolve, while in road networks, new roads are constructed. A problem is termed \emph{fully dynamic} if it involves both the insertion and deletion of edges.

A (fully) dynamic graph algorithm is a data structure designed to handle edge insertions, edge deletions, and specific problem-related queries. Key dynamic graph problems, such as connectivity, reachability, shortest paths, and matching, are widely studied (see~\cite{DBLP:journals/corr/abs-2102-11169}). Despite the extensive theoretical research on efficient dynamic graph algorithms, many of these have not been implemented or empirically tested. Some classical dynamic algorithms have undergone experimental studies, including early work on (all pairs) shortest paths \cite{DBLP:journals/jea/FrigioniINP98,DBLP:journals/talg/DemetrescuI06}, reachability~\cite{DBLP:conf/alenex/HanauerH020}, and transitive closure~\cite{DBLP:journals/jea/KrommidasZ08,DBLP:conf/wea/HanauerH020}. More recent contributions have focused on fully dynamic clustering (graph, $k$-center)\cite{DBLP:conf/wads/DollHW11,DBLP:conf/alenex/GoranciHLSS21}, fully dynamic approximation of betweenness centrality\cite{DBLP:conf/esa/BergaminiM15}, and fully dynamic minimum cuts~\cite{DBLP:conf/alenex/HenzingerN022}. However, the engineering and practical implementation of these algorithms are still underdeveloped. Many fundamental dynamic graph problems have received little attention from a practical engineering perspective, with few efficient implementations available.

A crucial \emph{building block} for fully dynamic algorithms is storing sparse graphs with \emph{low} memory requirements while still enabling fast adjacency queries. Specifically, for two vertices $u$ and $v$, a function should return true if $\{u,v\}\in E$ and false otherwise, ideally in constant time. Traditional methods for storing dynamic graphs include adjacency matrices, which require $O(n^2)$ space and can answer such queries in $O(1)$ time, or adjacency lists, which require $O(n+m)$ space but may need to search the entire neighborhood of a vertex, potentially leading to slower query times.

For static graphs, Kannan et al.~\cite{DBLP:journals/siamdm/KannanNR92} introduce a method to store an undirected graph efficiently, supporting adjacency queries in $O(\alpha)$ time, where $\alpha$ is the arboricity of the graph.
The \emph{arboricity} $\alpha(G)$ of a graph is the minimum number $t$ such that the graph $G$ can be decomposed into $t$ forests.
The fundamental concept of their algorithm is straightforward: store each edge in the adjacency list of only one of its endpoints. Thus, queries can be executed by checking the adjacency lists of vertex $u$ for vertex $v$, and vice versa.
This approach allows both nodes to maintain relatively short adjacency lists, even for vertices with high degrees.
Storing an edge at a single endpoint equates to giving the edge a direction, making it originate from the stored endpoint.
Formally, an \emph{orientation} of a graph $G=(V,E)$ is defined as a \emph{directed graph} $\ori{G}=(V,E')$ where for every edge $\set{u,v} \in E$, either the directed edge $(u,v)$ or $(v,u)$ is included in $E'$.
To ensure efficient (constant-time) adjacency queries, the out-degree of any vertex in $\ori{G}$ should not exceed a certain constant $\Delta$, leading to the concept of $\Delta$-orientations.

In the static case, an $\alpha$-orientation always exists~\cite{DBLP:journals/gc/ChenMWZZ94, 10.1112/jlms/s1-39.1.12, 10.1112/jlms/s1-36.1.445}.
In practice, one is thus interested in algorithms computing a $\Delta$-orientation with~$\Delta$~being small. The problem of minimizing $\Delta$ can also be solved in polynomial time in static graphs~\cite{venkateswaran2004minimizing}. These algorithms work by finding and reversing paths between vertices with the highest out-degree and those with lower out-degrees iteratively and exhaustively. Additional engineering of these algorithms yields significant performance boosts in practice \cite{reinstaedtler2024engineering}. 

 In this study, we primarily explore the fully dynamic scenario, where maintaining $\Delta$-orientations is crucial for supporting rapid adjacency queries. This capability serves as a fundamental component in numerous dynamic graph algorithms (refer to the detailed list in the extended version of \cite{DBLP:conf/icalp/KopelowitzKPS14}). For instance, Neiman and Solomon \cite{DBLP:journals/talg/NeimanS16} demonstrated the maintenance of a maximal matching within $O(\frac{\log n}{\log \log n})$ amortized time by employing a dynamic edge orientation algorithm. Such algorithms are also instrumental in dynamic matrix vector multiplication \cite{DBLP:conf/icalp/KopelowitzKPS14}, and Kowalik and Kurowski \cite{DBLP:journals/talg/KowalikK06} utilize them to efficiently respond to shortest-path queries up to a specified length $k$ in planar graphs. Additional applications of fully dynamic edge orientations include dynamic coloring \cite{DBLP:conf/icalp/ChristiansenR22} and the tracking of subgraph counts \cite{DBLP:conf/sand/HanauerHH22}. However, only recently have heuristic algorithms for the dynamic problem been \hbox{evaluated in practice~\cite{DBLP:conf/acda/BorowitzG023}.}

\textbf{Contributions:} In this study, we present three novel algorithms designed to maintain an optimal edge orientation amidst edge insertions and deletions. Alongside a naive algorithm, we introduce two advanced algorithms based on novel invariants. Our comprehensive evaluation, conducted on both real-world dynamic graphs and dynamic graphs derived from real-world static graphs, demonstrates significant improvements.
The worst-case complexity of our algorithm for insertion is $\mathcal{O}(m)$, while for deletion it is amortized $\mathcal{O}(m)$. 
Compared to recent nearly-exact algorithms, our approaches achieve a 32\% reduction in running time while ensuring optimal problem-solving in contrast to previous algorithm.
 When comparing with recent static exact algorithms, we achieve an update time 6 orders of magnitude faster than the static running time.

\section{Preliminaries}
\label{sec:preliminaries}

\subsection{Basic Concepts.}
\label{subsec:basic_concepts}

Let $G=(V=\{0,\ldots, n-1\},E)$ be an \emph{undirected graph} of $n$ vertices and $m$ edges. Let $\Gamma(v) = \setGilt{u}{\set{v,u}\in E}$ denote the neighbors of a vertex $v$ and $\deg(v)=\abs{\Gamma(v)}$ the \emph{degree of $v$}.
Further, let $\Delta(G)$ denote the maximum out-degree of $G$.
For a subset of vertices $S$ we define $E(S)$ as the subset of edges having both vertices in $S$.
The density of a (sub-)graph is given by $\lvert E\rvert/\lvert V\rvert$.
A graph-sequence $\mathcal{G} = (G_0, \ldots, G_t)$ for some $t\in\natnull$ is an \emph{edit-sequence of graphs} if there exists
for all $i>0$ an edge $e\in \binom{V}{2}$ such that it is either inserted, \ie $G_{i} = G_{i-1} + e$, or deleted, \ie $G_{i} = G_{i-1} - e$, in update~$i$. 
The \emph{arboricity} $\arboricity(G)$ of a graph is defined as the smallest number $t$ such that $G$ can be partitioned into $t$ forests.
A graph-sequence $\mathcal{G}$ has \emph{bounded arboricity} $\alpha > 0$ if $\arboricity(G)\leq \alpha$ for all $G\in\mathcal{G}$.
A pseudoforest is a forest, where in each connected component there can be one circle.
Similarly, the \emph{pseudoarboricity} is defined as the smallest number of $t$ such that $G$ can be partitioned into $t$ pseudoforests.
An \emph{orientation} of a graph $G=(V,E)$ is a \emph{directed graph} $\ori{G}=(V,E')$ such that for every $\set{u,v}\in E$ either $(u,v)$ or $(v,u)$ is in $E'$. 
The out-degree of a node~$v$ in a directed graph is defined as the number of edges starting in $v$, $\odeg{v}{\ori{G}} = \abs{\setGilt{u}{(v,u)\in E'}}$. By $\Delta$ we refer to the maximum out-degree in $\ori{G}$.
If $\Delta\leq c$, then $\ori{G}$ is a \emph{$c$-orientation} of $G$.
The graph-sequence $\ori{\mathcal{G}} = (\ori{G}_0, \ldots , \ori{G}_t)$ is a \emph{sequence of orientations} of $\mathcal{G}$ if every $\ori{G}_i$ is an orientation of $G_i$. 
In the same way, $\ori{\mathcal{G}}$ is a \emph{sequence of $c$-orientations} if all $\ori{G}_i$ are $c$-orientations.
The goal of the fully dynamic edge orientation problem is to keep the maximum out-degree $\Delta$ minimum at each point in time. 

Given some orientation $\ori{G}$, we say, we \emph{flip} an edge $(u,v)\in E'$ if we delete it and insert $(v,u)$.
Let $P=\langle v_0,\dots, v_k \rangle$ be a \emph{directed path} of length $k$ in $\ori{G}$ \ie there exists an edge $(v_i,v_{i+1})$ in $E'$ for $0 \leq i < k$. $P$ is said to be a \emph{$u$-$v$-path} if it starts in $u$ and ends in $v$.
We \emph{flip} a path by flipping every edge once and denote the obtained result as the \emph{inverse path} $P_f$ of $P$. Two paths, $P_1$ and $P_2$, are said to \emph{share edges} if there is an edge $e \in P_1$ such that $e \in P_2$. $P = \langle u,\dots,v \rangle$ is called an \emph{improving path} if $\odeg{u}{\ori{G}} > \odeg{v}{\ori{G}} + 1$.
We call the vertices with the largest out-degree in an orientation \emph{peak vertices} and vertices $v$ with $\odeg{v}{\ori{G}} < \Delta-1$~\emph{sink~vertices}.

\subsection{Related Work.}
\label{subsec:related_work}
There is a wide range on fully dynamic algorithms in literature in general. The most studied dynamic problems are graph problems such as connectivity, reachability, shortest paths, or matching.
 We refer the reader to the recent survey \cite{DBLP:journals/corr/abs-2102-11169} for more~details.
 In order to understand the underlying problem we give a brief summary of related work on edge orientation and pseudoarboricity.

\subparagraph{Edge Orientation \& Pseudoarboricity.}
Finding an edge orientation is closely related to identifying the densest subgraph and pseudoarboricity. The rounded up density of the densest subgraph equals the pseudoarboricity and thus the optimal out-degree in an edge orientation~\cite{kowalik2006approximation,aichholzer1995optimal,venkateswaran2004minimizing}.
The problem can be $2$--approximated by repeatedly deleting minimum degree vertices in linear time~\cite{charikar2000greedy}.
There are two major techniques to solve the static edge orientation problem exactly.
On the one hand, Venkateswaran~\cite{venkateswaran2004minimizing} provide a framework to solve the problem by finding paths between peak and sink vertices.
On the other hand, Kowalik~\cite{kowalik2006approximation} and Asahiro~et~al.~\cite{asahiro2007graph} present solutions derived from building flow-networks.

Blumenstock~\cite{doi:10.1137/1.9781611974317.10} proved a general worst case bound of $\mathcal{O}(m^{3/2}\sqrt{\log\log \Delta})$ and gave the first experimental evaluation of flow-based approaches.
More recently, Reinstädtler~et~al.~\cite{reinstaedtler2024engineering} presented an alternative flow-based formulation and conducted an extensive experimental study comparing path-based and flow-based algorithms for the problem.
They report significant performance increases by refining path-based techniques.

\subparagraph{Fully Dynamic Edge Orientation Algorithms.}
We now give a high-level overview of dynamic results in the literature. 
To the best of our knowledge, there is no algorithm that maintains an optimum edge orientation under edge updates. All currently available dynamic algorithms are either heuristic or approximate.
Brodal and Fagerberg \cite{DBLP:conf/wads/BrodalF99} were the first to consider the problem in the dynamic case. The authors present a linear space data structure for maintaining graphs with bounded arboricity. The data structure requires a bound $c$ on the arboricity of the graph as input. It then supports adjacency queries in $O(c)$ time, edge insertions in amortized time $O(1)$ as well as edge deletions in amortized time $O(c+\log n)$. The authors note that if the arboricity of a dynamic graph remains bounded, then the forest partitions may change due to the update. To deal with this the authors introduce a re-orientation operation, also called flipping above, which can change the orientation of an edge in order to maintain a small out-degree.
Kowalik \cite{DBLP:journals/ipl/Kowalik07} also needs a bound $c$ on the arboricity. In particular, Kowalik shows that the algorithm of Brodal and Fagerberg can maintain an $O(\alpha \log n)$ orientation of an initially empty graph with arboricity bounded by $c$ in $O(1)$ amortized time for insertions and $O(1)$ worst-case time for deletions. 
Kopelowitz \etal \cite{DBLP:conf/icalp/KopelowitzKPS14} gave an algorithm that does not need a bound on the arboricity as input. Their algorithm maintains an $O(\log n)$-orientation in worst-case update time $O(\log n)$ for any constant arboricity~$\alpha$.
He~\etal\cite{DBLP:conf/isaac/HeTZ14} show how to maintain an $O(\beta \alpha)$-orientation in $O(\frac{\log(n/(\beta \alpha))}{\beta})$ amortized insertion time and $O(\beta \alpha)$ amortized edge deletion time thereby presenting a trade-off between quality of the orientation (the maximum out-degree) and the running time of the operations. Berglin and Brodal \cite{DBLP:journals/algorithmica/BerglinB20} gave an algorithm that allows a \emph{worst-case} user-specific trade-off between out-degree and running time of the operations. Specifically, depending on the user-specified parameters, the algorithm can maintain $O(\alpha+\log n)$ orientation in $O(\log n)$ worst-case time or an $O(\alpha \log^2 n)$-orientation in constant worst-case time.
Banerjee~\etal\cite{BANERJEE20201} present a fully dynamic algorithm for keeping track of the current arboricity of a fully dynamic graph.
Christiansen~\etal\cite{DBLP:journals/corr/abs-2209-14087} published a report which contains algorithms that  make choices based entirely on local information, which makes them automatically adaptive to the current arboricity of the~graph. One of their algorithm maintains a $O(\alpha)$-orientation with worst-case update time $O(\log^2n \log \alpha)$. The authors also provide an algorithm with worst-case update time $O(\log n \log \alpha)$ to maintain an $O(\alpha + \log n)$-orientation.
Recently, Borowitz~\etal\cite{DBLP:conf/acda/BorowitzG023} performed an experimental evaluation for a range of heuristic and approximation algorithms for the fully dynamic problem. Their most competitive with respect to solution quality algorithm is a breadth-first search with limited depth, which they call \textsc{BFS20} and we compare our algorithms with. 

\section{From Optimal Static to Optimal Dynamic Delta-Orientation Algorithms}
\label{s:main}
A variety of methods have been developed to address the edge orientation problem in static graphs. One notable algorithm, proposed by Venkateswaran~\cite{venkateswaran2004minimizing} (see Algorithm~\ref{alg:venkateswaran}), serves as the foundation for the optimal dynamic algorithms introduced in this study. This algorithm begins by arbitrarily assigning an orientation and then iteratively searches for improvements. It identifies paths between a set of vertices, $S$, with a maximum out-degree of $k$, and another set, $T$, with vertices of out-degrees less than $k-1$. 
 If $S$ is empty, the value of $k$ is decremented by one, and the sets $S$ and $T$ are reinitialized. The search terminates when no paths are found, at this point the current $k$ is deemed optimal. The theoretical underpinning of this algorithm is confirmed by examining the density of the subgraph induced by the vertices encountered during an unsuccessful breadth-first search (BFS) originating from $S$. These vertices have at least an out-degree of $k-1$, and at least one has an out-degree of $k$, yielding an average density exceeding $k-1$. This substantiates a pseudoarboricity of $k$ based on subgraph density arguments. Furthermore, the algorithm’s time complexity is established as $\mathcal{O}(m^2)$. Each path is located in $\mathcal{O}(m)$ time, and the potential number of improvements is limited to $m$, constrained by the total number of edges.

In the following, we present three fully dynamic algorithms. First, we introduce a naive extension of Venkateswaran's algorithm to the fully dynamic case, referred to as \naive. This algorithm flips all improving paths after each update. Next, we describe a more efficient algorithm named \strong, which searches for only one improving path per update by maintaining a strong invariant. Finally, we present \imp, an advanced version that relies on a less stringent invariant. This relaxation further reduces the number of improving path \hbox{searches, particularly for insertions.}

\begin{algorithm}[t]
  \begin{algorithmic}[1]
    \Procedure{Venkateswaran}{$G=(V,E)$}
    \State $\ori{G} \gets \textrm{an arbitrary orientation of } G$ 
    \State $k\gets\max_{v\in V}\odeg{v}{\ori{G}}$
    \State $S\gets\{ v\in V \mid \odeg{v}{\ori{G}}= k\}$
    \State $T \gets\{ v\in V \mid \odeg{v}{\ori{G}}\leq k-2\}$
    \While{$\exists$ path $P$=$\langle s, \ldots, t \rangle$ from $S$ to $T$ in $\ori{G}$}
    \State Flip $P$ in $\ori{G}$ 
    \State Remove $s$ from $S$
    \State Remove $t$ from $T$ if $\odeg{t}{\ori{G}} = k-1$
    \If{$S$ empty}
    \State $k\gets k-1$
    \State $S\gets\{ v\in V \mid \odeg{v}{\ori{G}}= k\}$
    \State $T \gets\{ v\in V \mid \odeg{v}{\ori{G}}\leq k-2\}$
    \EndIf
    \EndWhile
    \State\Return $k$
    \EndProcedure
  \end{algorithmic}
  \caption{Static Algorithm by Venkateswaran~\cite{venkateswaran2004minimizing}}
  \label{alg:venkateswaran}
\end{algorithm}
\subsection{The Algorithm \naive.}
The core idea of the naive algorithm is to run Venkateswaran algorithm with the current orientation as a starting point after each update. 
The current objective function value (max out-degree $\Delta$) is maintained in a global variable and vertices are stored in a bucket priority queue by their out-degree.
When an edge~$(u,v)$ is inserted, the algorithm assigns the edge to the adjacency list of $u$.
Furthermore, the algorithm starts a breadth-first search initialized with all maximum out-degree vertices in the directed graph induced by the current orientation and tries to find a vertex $y$ with out-degree strictly smaller than~$\Delta-1$. The breadth-first search algorithm stops as soon as it has found such a vertex.
Assume the algorithm finds such a vertex and let the corresponding path be $p = \langle\phi, \ldots, y\rangle$ where $\phi$ is a maximum out-degree vertex found by following parent pointers, computed by the breadth-first search, starting at $y$. Note that all edges on this path are oriented from $\phi$ to $y$. The algorithm flips each edge on the path, thereby increasing the out-degree of $y$ by one and decreasing the out-degree of $\phi$ by one.
The flipping operations can be done in $\mathcal{O}(|P|)$, as one can store the positions of the target vertices of the edges in the respective adjacency arrays while doing the breadth-first search. 
The algorithm continues searching for improving paths until a search is unsuccessful. Correctness of the algorithm directly follows from the correctness of Venkateswaran's algorithm. 
We arrive at a time complexity of $\mathcal{O}(m^2)$ for a single update operation which is the worst-case running time of Venkateswaran's algorithm. 
Note however that using similar arguments as we use in the following Section~\ref{sec:strong} it is possible to show that inserting an edge can create at most one new improving path from a peak vertex to a sink vertex. Thus, the insert operation of the naive algorithm runs in in $\mathcal{O}(m)$ time.

\newpage
\subsection{The Algorithm \strong.}
\label{sec:strong}
The \naive~algorithm maintains the invariant that there is no improving path between a peak vertex and a sink vertex.
We now present an algorithm that maintains an even stronger invariant. More precisely, the general idea behind \strong, which is presented in Algorithm~\ref{alg:strong}, is maintaining the following invariant.
\begin{invariant}\label{strong_invariant}
There is no improving path between \emph{any} two vertices.
\end{invariant}
\begin{algorithm}[t]
  \caption{Depth-First Find Path Routines}
  \label{alg:find:path}
  \begin{algorithmic}[1]
      \Procedure{FindAndFlipPath}{$u, \ori{G}$}
      \If{${visited}[u]$}
        \State \Return {\textnormal{false}}
      \EndIf
      \ForAll{$v \in \textsc{Adj}[u]$}
         \If{$\odeg{v}{\ori{G}} < \odeg{u}{\ori{G}} - 1$}
          \State flip $(u,v)$
          \State \Return {\textnormal{true}}
        \EndIf
      \EndFor
      \State ${visited}[u] \gets $ true
      \ForAll{$v \in \textsc{Adj}[u]$}
        \If{$\odeg{v}{\ori{G}} = \odeg{u}{\ori{G}} - 1$}
          \If{\Call{FindAndFlipPath}{$v,\ori{G}$}}
            \State flip $(u,v)$ 
            \State \Return {\textnormal{true}}
          \EndIf
        \EndIf
      \EndFor
      \State \Return {\textnormal{false}}
    \EndProcedure
  
    \Procedure{FindAndFlipPathRev}{$u, \ori{G}$}
      \If{${visited}[u]$}
        \State \Return {\textnormal{false}}
      \EndIf
      \ForAll{$v$ with $u \in \textsc{Adj}[v]$}
        \If{$\odeg{v}{\ori{G}} > \odeg{u}{\ori{G}} + 1$}
          \State flip $(v,u)$
          \State \Return {\textnormal{true}}
        \EndIf
      \EndFor
      \State \emph{visited}$[u] \gets $ true
      \ForAll{$v$ with $u \in \textsc{Adj}[v]$}
        \If{$\odeg{v}{\ori{G}} = \odeg{u}{\ori{G}} + 1$}
          \If{\Call{FindAndFlipPathRev}{$v,\ori{G}$}}
            \State flip $(v,u)$
            \State \Return {\textnormal{true}}
          \EndIf
        \EndIf
      \EndFor
      \State \Return {\textnormal{false}}
    \EndProcedure
  \end{algorithmic}
  \end{algorithm}
As we will show, it is possible to maintain Invariant~\ref{strong_invariant} using only one improving path search from a single node per update operation. We achieve this by employing the two functions described in Algorithm~\ref{alg:find:path}. \func{FindAndFlipPath}$(u,\ori{G})$ represents a simple depth-first search that can find and flip an improving path in $\ori{G}$ that starts from a node $u$. Similarly, \func{FindAndFlipPathRev}$(u,\ori{G})$ is able to do the same with $u$ as the end node of the found path by conducting a search on the reverse orientation of~$\ori{G}$. Further enhancements are discussed in Section~\ref{sec:path_finding}.

Upon insertion, the newly added edge is initially oriented from the node with the lower out-degree. Subsequently, an improving path is sought from that node by calling \func{FindAndFlipPath}. Conversely, during deletion, the edge is removed followed by a reverse improving path search using \func{FindAndFlipPathRev} from the node the edge was originally oriented from. This search aims to find another node with a higher out-degree that can now be improved by flipping this path.

In the following, we first demonstrate that Invariant~\ref{strong_invariant} results in an optimal edge orientation and then proceed to prove that Algorithm~\ref{alg:strong}~maintains Invariant~\ref{strong_invariant}. Through this proof, we establish that it is only necessary to flip \emph{at most one} improving path after each modification to the graph to guarantee an optimal solution.

\begin{algorithm}[t]
	\caption{\strong}
	\label{alg:strong}
	\begin{algorithmic}[1]
		\Procedure{Insert}{$u, v, \ori{G}$}
		\State W.l.o.g.: $\odeg{u}{\ori{G}} \leq \odeg{v}{\ori{G}}$
		\State $\textsc{Adj}[$u$] := \textsc{Adj}[u] \cup\{v\}$
		\State $\func{FindAndFlipPath}(u, \ori{G})$
		\EndProcedure
		\Procedure{Delete}{$u, v, \ori{G}$}
		\State W.l.o.g. edge is directed from $u$ to $v$
		\State $\textsc{Adj}[u] := \textsc{Adj}[u] \setminus \{v\}$
		\State $\func{FindAndFlipPathRev}(u, \ori{G})$
		\EndProcedure
	\end{algorithmic}
\end{algorithm}

\vspace{-0.25cm}
\subsubsection{Proof of Correctness.}\label{sec:strong_proof_of_correctness}
We now show that the optimality of the orientation 
is guaranteed when maintaining Invariant~\ref{strong_invariant}. To show this, we first introduce the subsequent auxiliary lemma.

\begin{lemma}[Venkateswaran~\cite{venkateswaran2004minimizing}]\label{lem:help_venkateswaran}
Given a graph $G=(V,E)$, if an edge orientation $\ori{G}$ has  maximum \hbox{out-degree $\Delta$} and there is a subset $S$ of vertices such that $\Delta = \lceil \lvert E(S)\rvert / \lvert S\rvert \rceil$, then $\Delta$ is the optimum out-degree.
\end{lemma}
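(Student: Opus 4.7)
The plan is a standard pigeonhole/density argument: show that for any orientation whatsoever, the subset $S$ already forces some vertex in $S$ to have out-degree at least $\Delta$, so no orientation can beat $\Delta$.

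First I would fix an arbitrary orientation $\ori{H}$ of $G$ (not necessarily the given $\ori{G}$) and consider the restriction to $S$. The key observation is that every edge $e \in E(S)$ has both endpoints in $S$, so when oriented it contributes exactly $1$ to the out-degree of some vertex of $S$. Edges with only one endpoint in $S$ may additionally contribute to out-degrees of vertices in $S$, but never negatively. Hence
\[
\sum_{v \in S} \odeg{v}{\ori{H}} \;\geq\; \lvert E(S)\rvert.
\]

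Next I would apply the pigeonhole (averaging) principle: some vertex $v^\ast \in S$ must satisfy $\odeg{v^\ast}{\ori{H}} \geq \lvert E(S)\rvert / \lvert S\rvert$. Since out-degrees are integers, this upgrades to
\[
\odeg{v^\ast}{\ori{H}} \;\geq\; \left\lceil \frac{\lvert E(S)\rvert}{\lvert S\rvert}\right\rceil \;=\; \Delta.
\]
Therefore every orientation of $G$ has maximum out-degree at least $\Delta$. Since $\ori{G}$ achieves exactly $\Delta$ by hypothesis, $\Delta$ is the optimum out-degree.

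There is no real obstacle here; the only subtle point is to be careful that the inequality $\sum_{v\in S} \odeg{v}{\ori{H}} \geq |E(S)|$ uses only the edges inside $S$ (ignoring out-edges going from $S$ to $V \setminus S$), so it holds regardless of how cross-edges are oriented. The ceiling step is then automatic from integrality of out-degrees.
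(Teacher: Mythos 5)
Your proof is correct, and it matches the standard subgraph-density/pigeonhole argument that the paper alludes to (in the discussion of Venkateswaran's algorithm) but does not spell out, since it cites the lemma rather than reproving it. The key steps — every edge of $E(S)$ contributes one to the out-degree of some vertex of $S$ under any orientation, averaging gives a vertex of out-degree at least $\lvert E(S)\rvert/\lvert S\rvert$, and integrality upgrades this to the ceiling — are exactly right and establish the lower bound $\Delta$ for every orientation.
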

With the help of Lemma~\ref{lem:help_venkateswaran}, we can now prove the next statement by following Venkateswaran~\cite{venkateswaran2004minimizing} for the correctness of his extremal orientation algorithm.

\begin{theorem}\label{thm:optimality_invariant}
For a given graph $G=(V, E)$ and an edge orientation $\ori{G}$ satisfying  Invariant~\ref{strong_invariant} the resulting maximum out-degree $\Delta$ is optimal.
\end{theorem}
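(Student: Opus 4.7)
The plan is to apply Lemma~\ref{lem:help_venkateswaran} by exhibiting a subset $S$ whose induced density forces $\Delta$ to be optimal. Concretely, I would pick any peak vertex $s$ with $\odeg{s}{\ori{G}} = \Delta$ and let $S$ be the set of vertices reachable from $s$ via directed paths in $\ori{G}$ (including $s$ itself). This closure is the natural candidate because every out-edge leaving a vertex of $S$ must land in $S$, which gives the inequality $|E(S)| \geq \sum_{v \in S} \odeg{v}{\ori{G}}$, the main tool for obtaining a density lower bound on the induced subgraph.

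Next, I would use Invariant~\ref{strong_invariant} to control the out-degrees inside $S$. Since $s$ reaches every $t \in S$ by a directed path and $\odeg{s}{\ori{G}} = \Delta$, the absence of an improving path from $s$ to $t$ forces $\odeg{t}{\ori{G}} \geq \Delta - 1$ for every $t \in S$ (otherwise $\Delta > \odeg{t}{\ori{G}} + 1$ would make that very path improving). Combining this with the previous edge count yields
\[
|E(S)| \;\geq\; \Delta + (|S| - 1)(\Delta - 1) \;=\; |S|(\Delta - 1) + 1,
\]
so $|E(S)| / |S| > \Delta - 1$, and consequently $\lceil |E(S)| / |S| \rceil \geq \Delta$.

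To invoke Lemma~\ref{lem:help_venkateswaran} I still need the matching upper bound $\lceil |E(S)| / |S| \rceil \leq \Delta$. This is a standard pigeonhole observation: in any orientation of $G$, the $|E(S)|$ edges with both endpoints in $S$ must be oriented so that at least one endpoint in $S$ is responsible for each, forcing some vertex of $S$ to have out-degree at least $\lceil |E(S)|/|S| \rceil$. Applied to the current orientation, whose maximum out-degree is $\Delta$, this gives $\Delta \geq \lceil |E(S)|/|S| \rceil$, hence equality. Lemma~\ref{lem:help_venkateswaran} then immediately concludes that $\Delta$ is optimal.

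The main obstacle, if any, is ensuring that $S$ is truly closed under out-edges, which relies on its definition via directed reachability rather than on any combinatorial search procedure; once that is clear, the arithmetic is routine. Everything else is a careful bookkeeping of the definition of improving paths against the out-degree of the peak vertex $s$.
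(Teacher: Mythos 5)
Your proposal is correct and follows essentially the same route as the paper: take the directed reachability closure of a peak vertex (the paper uses the closure of the full set of peak vertices, which is equivalent), use Invariant~\ref{strong_invariant} to force every vertex in the closure to have out-degree at least $\Delta-1$, bound $|E(S)|/|S|$ from below, and invoke Lemma~\ref{lem:help_venkateswaran}. The only difference is that you explicitly justify the matching upper bound $\lceil |E(S)|/|S|\rceil \leq \Delta$ via pigeonhole, which the paper leaves implicit; that is a reasonable and slightly more careful bookkeeping of the same argument.
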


\begin{proof}[Theorem~\ref{thm:optimality_invariant}].
Let $S = \{v \in V \mid \func{odeg}(v,\ori{G}) = \Delta\}$ and \hbox{$T = \{v \in V \mid \func{odeg}(v,\ori{G}) \leq \Delta-2\}$}. Let $U$ be the set of $S$ and all nodes reachable from $S$ by direct paths. 
Invariant~\ref{strong_invariant} implies, that for all $u\in U$ it holds $\func{odeg}(u,\ori{G}) \geq \Delta-1$. Thus follows \hbox{$\lvert E(U)\rvert = \sum_{u \in U} \func{odeg}(u,\ori{G}) > \lvert U\rvert(\Delta - 1)$} since at least one node in $U$ has out-degree $\Delta$. 
Therefore, $\Delta = \lceil \lvert E(U)\rvert / \lvert U\rvert \rceil$ and with Lemma~\ref{lem:help_venkateswaran} follows \hbox{that $\Delta$ is optimal.}
\end{proof}

As a next step, we show that Algorithm \ref{alg:strong} maintains Invariant~\ref{strong_invariant} with every update operation by searching for one improving path and flipping it on success. We demonstrate this in Theorem~\ref{thm:invariant_holds}. Before that we present some preparatory lemmas.

In the following, let $\preG$ denote the orientation in our sequence before and $\postG$ after the update operation. Further, let $\midG$ define the orientation resulting from $\preG$ after the graph update, but before edges have been reoriented. Note that  $\midG$ is not \hbox{necessarily an optimal orientation.}

\begin{lemma}\label{lem:find_path} 
Let $\preG$ satisfy Invariant~\ref{strong_invariant} and let $\{u,v\}$ be the updated edge. If Invariant~\ref{strong_invariant} is not satisfied for $\midG$, we find an improving path when running the routines in Algorithm~\ref{alg:strong}. The path starts at $u$ in case of an insertion and ends in $u$ in case of a deletion. 
\end{lemma}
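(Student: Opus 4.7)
The plan is to handle the insertion and deletion operations separately but symmetrically. In each case the argument has two ingredients: first, a structural step showing that, under Invariant~\ref{strong_invariant} on $\preG$, any improving path in $\midG$ is forced to have $u$ at the prescribed endpoint; second, a constructive step showing that the corresponding depth-first routine, launched at $u$, actually traces such a path.

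For insertion, I would take an arbitrary improving path $P = \langle s, \ldots, t \rangle$ in $\midG$ and split on whether it uses the freshly oriented edge $(u,v)$. If $P$ avoids $(u,v)$, then $P$ is also a directed path in $\preG$; because $\func{odeg}$ changed only at $u$, a brief subcase analysis on whether $t = u$ shows that the improving inequality persists in $\preG$, contradicting Invariant~\ref{strong_invariant}. Hence $P$ must use $(u,v)$, so we may write $P = P_1 \cdot (u,v) \cdot P_2$ with $P_1 = \langle s, \ldots, u \rangle$ a directed path in $\preG$. Applying Invariant~\ref{strong_invariant} to $P_1$ yields $\odeg{s}{\preG} \leq \odeg{u}{\preG} + 1$, i.e., $\odeg{s}{\midG} \leq \odeg{u}{\midG}$, and combined with the improving property of $P$ this forces $\odeg{u}{\midG} > \odeg{t}{\midG} + 1$, so $(u,v) \cdot P_2$ is itself an improving path rooted at $u$.

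The deletion case is dual. Removing the oriented edge $(u,v)$ produces $\odeg{u}{\midG} = \odeg{u}{\preG} - 1$ with all other out-degrees unchanged, and every directed path in $\midG$ is also one in $\preG$. A case analysis on the endpoints of a hypothetical improving path $P = \langle s, \ldots, t\rangle$ in $\midG$ shows that if $s \neq u$ and $t \neq u$ the path is already improving in $\preG$, while if $s = u$ the improving inequality becomes only stronger in $\preG$ since $\odeg{s}{\preG} = \odeg{s}{\midG} + 1$. Either possibility would contradict Invariant~\ref{strong_invariant} on $\preG$, leaving $t = u$ as the only option and justifying why $\func{FindAndFlipPathRev}$ is launched at $u$.

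What remains is to verify that the DFS-like routines really discover such a path, and this is the main technical obstacle. I would exploit Invariant~\ref{strong_invariant} on $\preG$ to tightly constrain the out-degree profile of every directed path in $\midG$: for any edge $(a,b)$ with $a \neq u$ we have $\odeg{a}{\midG} \leq \odeg{b}{\midG} + 1$, while edges originating at $u$ may exhibit a drop of at most two. Guided by the improving path rooted at $u$ produced in the previous step, I would argue by induction on its length that at every invocation of $\func{FindAndFlipPath}$ the current vertex either sees a direct out-neighbor realizing a drop of at least two (the base case of the routine returning true) or finds an out-neighbor whose out-degree is exactly one less, along which the recursion descends further. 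A symmetric analysis handles $\func{FindAndFlipPathRev}$ for the deletion case. Carefully aligning this inductive descent with the rigid structural constraints of the recursive routine---so that the DFS never outruns the improving path it is tracing---will likely require the most attention.
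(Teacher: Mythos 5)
The student follows essentially the same two-part structure as the paper's (very terse) proof: (i) argue structurally that only $u$'s out-degree changed, so any improving path in $\midG$ must have $u$ at the prescribed endpoint; (ii) conclude that the search routine launched at $u$ finds it. The structural step is carried out in considerably more detail than the paper gives, so the comparison is favorable there, but the case split for insertion has a genuine hole. You claim that if the improving path $P=\langle s,\ldots,t\rangle$ avoids the new edge $(u,v)$ a contradiction follows, and you only subcase on $t=u$. The subcase $s=u$ with $P$ avoiding $(u,v)$ is not covered, and there the argument fails: since $\odeg{u}{\preG}=\odeg{u}{\midG}-1$, the improving inequality does \emph{not} persist in $\preG$ (e.g., $\odeg{u}{\midG}=\odeg{t}{\midG}+2$ gives equality, not strict inequality, in $\preG$), so there is no contradiction. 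Consequently the inferred statement ``hence $P$ must use $(u,v)$'' is false. The lemma's conclusion still holds in that subcase because $P$ itself already starts at $u$, but your derivation needs to list it explicitly rather than funnel everything through the edge $(u,v)$.

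On part (ii) you are right to flag the constructive claim as the real difficulty: the paper simply asserts that $\func{FindAndFlipPath}$ / $\func{FindAndFlipPathRev}$ return an improving path whenever one exists, with no argument. Your inductive sketch, however, does not close the gap. Notice that each recursive call of $\func{FindAndFlipPath}$ compares out-degrees against the \emph{current} argument, so the sink-threshold drops by one on every descent: starting at $u$ with $\odeg{u}{\midG}=d+1$, the first call accepts a sink with $\odeg{}<d$, but after descending to a neighbor with out-degree $d$ the recursive call only accepts $\odeg{}<d-1$. Invariant~\ref{strong_invariant} on $\preG$ forces every vertex reachable from $u$ to have out-degree at least $d-1$, so the endpoint of an improving path from $u$ has out-degree exactly $d-1$; if that endpoint lies at depth $\geq 2$ behind a chain of out-degree-$d$ vertices, the descending-threshold DFS as written would not recognize it as a sink. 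Your induction ``the current vertex either sees a drop of at least two or descends by exactly one'' glosses over exactly this boundary case, and the paper's own proof sentence does not touch it either. To make (ii) rigorous one would have to argue (using Invariant~\ref{strong_invariant} on $\preG$ and the reachability bounds it implies) why this boundary configuration cannot arise, or show the routine as implemented uses the starting vertex's out-degree as the fixed threshold throughout the search; neither is done in the sketch.
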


\begin{proof}[Lemma~\ref{lem:find_path}].
W.l.o.g., let $\odeg{u}{\preG} \leq \odeg{v}{\preG}$ for an insertion and $\{u,v\}$ be oriented from $u$ in $\preG$ for a deletion. During the transition from $\preG$ to $\midG$, only the out-degree of $u$ changes and $(u,v)$ being inserted in the case of an insertion. Therefore, Invariant~\ref{strong_invariant} is either satisfied, or there is an improving path $P = \langle u, \dots, w \rangle$ found by \func{FindAndFlipPath}$(u,\midG)$ starting from $u$ in the case of an insertion, or $P = \langle w, \dots, u \rangle$ found by \func{FindAndFlipPathRev}$(u,\midG)$ ending in $u$ for a deletion.
\end{proof}

\newcommand{\improvpath}{P_\mathcal{I}}
\begin{lemma}\label{lem:one_shared_node}
Let $\preG$ satisfy Invariant~\ref{strong_invariant}, $\{u,v\}$ be the updated edge and $P$ be an improving path found by the routines in Algorithm~\ref{alg:strong}. 
Furthermore, let $P_f$ be the \textbf{f}lipped path of $P$ in $\postG$.
Then, every improving path $\improvpath$ in $\postG$ shares at least one node with $P_f$.
\end{lemma}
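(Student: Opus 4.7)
The plan is to argue by contradiction: suppose there exists an improving path $\improvpath$ in $\postG$ that is vertex-disjoint from $P_f$, and derive a contradiction with the fact that $\preG$ satisfies Invariant~\ref{strong_invariant}.

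First, I would catalogue exactly where $\preG$ and $\postG$ differ, both in orientation and in out-degrees. The only vertices whose out-degree changes between $\preG$ and $\postG$ lie on $P_f$: the update itself changes $\odeg{u}{\cdot}$ by $\pm 1$, and $u$ is an endpoint of $P$ (by Lemma~\ref{lem:find_path}), hence on $P_f$; flipping the path $P = \langle v_0,\dots,v_k\rangle$ leaves the out-degrees of the internal vertices unchanged (each contributes one outgoing and one incoming edge both before and after the flip) while changing $\odeg{v_0}{\cdot}$ and $\odeg{v_k}{\cdot}$ by $\mp 1$ respectively. A small case distinction between insertion and deletion shows that the two effects cancel on $u$ and leave a net change of $\pm 1$ only on the other endpoint of $P$, which also lies on $P_f$.

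Next, I would observe that every edge whose presence or orientation differs between $\preG$ and $\postG$ is incident to at least one vertex of $P_f$: the newly inserted or deleted edge $\{u,v\}$ is incident to $u \in P_f$, and every flipped edge of $P$ has both endpoints in $P_f$. Since $\improvpath$ is assumed vertex-disjoint from $P_f$, none of its edges are incident to a vertex of $P_f$, so each edge of $\improvpath$ exists with the same orientation in $\preG$ as in $\postG$. Hence $\improvpath$ is also a valid directed path in $\preG$. Writing $\improvpath = \langle s,\dots,t\rangle$, neither $s$ nor $t$ lies on $P_f$, so by the previous step $\odeg{s}{\preG}=\odeg{s}{\postG}$ and $\odeg{t}{\preG}=\odeg{t}{\postG}$. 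The improving condition $\odeg{s}{\postG} > \odeg{t}{\postG}+1$ then transfers verbatim to $\preG$, producing an improving path in $\preG$ and contradicting Invariant~\ref{strong_invariant} for $\preG$.

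The main obstacle is the out-degree bookkeeping in the first step: one has to be careful that the endpoint of $P$ corresponding to $u$ really sees its out-degree changes cancel (insertion then flip, or deletion then flip) while the opposite endpoint does not, and that no other vertex is affected. Once this is laid out cleanly, the remainder is a direct transfer argument. I would treat insertion and deletion in two short parallel paragraphs to keep the sign conventions transparent.
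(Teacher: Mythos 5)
Your proof is correct and takes essentially the same approach as the paper's: the paper argues in the contrapositive (an improving path in $\postG$ must hit $w$, an edge of $P_f$, or $(u,v)$, else it would already be improving in $\preG$), while you argue directly by contradiction, assuming vertex-disjointness and then transferring $\improvpath$ unchanged back to $\preG$. The only difference is that you spell out the out-degree and edge-orientation bookkeeping more explicitly, which the paper compresses into a single sentence.
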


\begin{proof}[Lemma~\ref{lem:one_shared_node}].
W.l.o.g., let $\odeg{u}{\preG} \leq \odeg{v}{\preG}$ for an insertion and $\{u,v\}$ be oriented from $u$ in $\preG$ for a deletion. Lemma~\ref{lem:find_path} implies, that $P = \langle u, \dots, w \rangle$ in case of an insertion and $P = \langle w, \dots, u \rangle$ in case of a deletion for some $w \in V$.
Updating the orientation from $\preG$ to $\postG$ involves changing $\odeg{w}{\preG}$, flipping the edges in $P$, and adding $(u,v)$ in case of an insertion. By Invariant~\ref{strong_invariant}, an improving path $\improvpath$ in $\postG$ must either include $w$, share edges with $P_f$, or contain $(u,v)$. Since $u, w \in P_f$, $P_f$ and $\improvpath$ always share at least one node.
\end{proof}

The following lemmas demonstrate that once an improving path has been found, there can not be another improving path. We first address the case related to insertions and then proceed with deletions. 

\begin{lemma}\label{lem:ins}
Let $\preG$ satisfy Invariant~\ref{strong_invariant}, $\{u,v\}$ be the \emph{inserted} edge and $P$ be an improving path found by \func{FindAndFlipPath} after insertion in Algorithm~\ref{alg:strong}. Then, there cannot be an improving path $\improvpath$ in $\postG$.
\end{lemma}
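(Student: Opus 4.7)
The plan is a proof by contradiction: assume an improving path $\improvpath=\langle a,\dots,b\rangle$ exists in $\postG$, and construct an improving path in $\preG$ to violate Invariant~\ref{strong_invariant} on $\preG$. Let $P=\langle u=x_0,x_1,\dots,x_k=w\rangle$ be the path flipped by \func{FindAndFlipPath} and $D=\odeg{u}{\midG}=\odeg{u}{\preG}+1$.

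The first step is a preliminary reduction that $k=1$. The DFS structure of \func{FindAndFlipPath} forces $\odeg{x_i}{\midG}=D-i$ for $0\le i\le k-1$ and $\odeg{w}{\midG}\le D-k-1$. If $k\ge 2$, then either $P$ itself (when $x_1\neq v$) or its tail $\langle v,x_2,\dots,w\rangle$ (when $x_1=v$, noting that $\odeg{v}{\preG}=D-1$ is forced by the DFS descent) already lies in $\preG$; such a sub-path has an endpoint out-degree drop of at least two in $\preG$, contradicting Invariant~\ref{strong_invariant} on $\preG$. Hence $P=\langle u,w\rangle$, $w\neq v$, $(u,w)\in\preG$, and $\odeg{w}{\preG}=D-2$. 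After the update the only differences between $\preG$ and $\postG$ are that $(u,w)$ is flipped to $(w,u)$, the new edge $(u,v)$ is added with orientation $u\to v$, and $\odeg{w}$ rises from $D-2$ to $D-1$; every other out-degree is unchanged.

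Next, apply Lemma~\ref{lem:one_shared_node}, which forces $u\in\improvpath$ or $w\in\improvpath$, and case-analyze on which of the two ``new'' directed edges $(w,u)$ and $(u,v)$ are used by $\improvpath$. \textbf{(i) Neither used:} $\improvpath$ is already a directed path in $\preG$, so the improving condition transfers verbatim unless $w=a$; in that sub-case prepending the $\preG$-edge $(u,w)$ (and cutting at any repetition of $u$) gives a simple improving path in $\preG$, since $\odeg{u}{\preG}=D-1 > \odeg{b}{\preG}+1$ follows from $\odeg{w}{\postG}>\odeg{b}{\postG}+1$. \textbf{(ii) Only $(u,v)$ used:} split $\improvpath$ at $(u,v)$ into $\improvpath_1=\langle a,\dots,u\rangle$ and $\improvpath_2=\langle v,\dots,b\rangle$, both of which lie in $\preG$, and dichotomize on $\odeg{a}{\preG}$: either $\improvpath_1\cdot\langle u,w\rangle$ is improving in $\preG$ (since $\odeg{w}{\preG}=D-2$), or $\odeg{b}{\postG}\le D-3$ and then $\improvpath_2$ is improving in $\preG$ using $\odeg{v}{\preG}\ge D-1$. \textbf{(iii) Only $(w,u)$ used:} split at $(w,u)$ and dichotomize symmetrically. \textbf{(iv) Both used:} simplicity of $\improvpath$ forces the consecutive subsequence $w\to u\to v$, and the arguments of (ii) and (iii) combine cleanly.

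The main obstacle is the preliminary $k=1$ reduction: this is where the particular DFS discipline of \func{FindAndFlipPath} is used essentially, and it is what makes the rest of the proof manageable. Once the reduction is in hand, the case analysis is routine because the change from $\preG$ to $\postG$ is extremely local (two edges, one out-degree), and the witness edge $(u,w)\in\preG$ is always available to perform the path surgery in the delicate sub-cases.
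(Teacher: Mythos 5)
Your proof is correct in its main ideas but takes a genuinely different route from the paper's. The paper never performs your preliminary reduction to $k=1$; it treats $P=\langle u,\dots,w\rangle$ as an arbitrary improving path produced by \func{FindAndFlipPath}, sets up the inequalities $\odeg{v}{\postG}\geq\odeg{u}{\postG}\geq\odeg{w}{\postG}$, applies Lemma~\ref{lem:one_shared_node} to get a shared node $z$, and then case-splits on whether $\odeg{x}{\postG}>\odeg{u}{\postG}$ (building a path from $x$ to $w$ via the first shared node) or $\odeg{x}{\postG}\le\odeg{u}{\postG}$ (building a path to $y$ from $u$ or $v$ via the last shared node), with further sub-cases depending on whether $(u,v)\in P$. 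Your observation that the DFS discipline of \func{FindAndFlipPath}, combined with Invariant~\ref{strong_invariant} on $\preG$, forces the found path to be a single edge is a genuine structural insight that the paper does not state, and it does collapse the delicate case analysis to a very local picture (two changed arcs, one changed out-degree). What the paper's abstraction buys is robustness: its argument remains valid if the path-finding routine is replaced by any black box that returns \emph{some} improving path from $u$, not necessarily the level-by-level DFS, whereas your $k=1$ step depends essentially on the ``recurse only into out-degree-$(d-1)$ neighbors'' rule. A few edge cases in your write-up are compressed and would need to be spelled out in a polished version: in case~(i) you should note that $b=w$ also requires a word (the drop only increases, so it is easy); in case~(ii) the simplicity of $\improvpath_1\cdot\langle u,w\rangle$ when $w\in\improvpath_1$ requires truncation, which you mention only in case~(i); in case~(iii) the hidden side conditions $a\ne u$ and $b\ne u$ (from simplicity of $\improvpath$) should be stated; and case~(iv) deserves an explicit two-line argument rather than ``combine cleanly.'' None of these gaps is a real obstacle, and the verification that $\odeg{w}{\preG}=D-2$ exactly (sandwich between the DFS threshold and Invariant~\ref{strong_invariant}) is correctly identified as the hinge of the whole argument.
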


\begin{figure}
  \centering
  \includegraphics{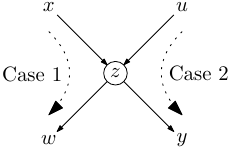}
  \caption{Visualization of the two (most important) cases in the proof for Lemma~\ref{lem:ins}. The path $u \to z \to w$ is the improving path found after inserting the edge, the path $x \to z \to y$ is the assumed improving path in $\postG$. Case 1 depicts the case in which $\odeg{x}{\postG} > \odeg{u}{\postG}$. In this case an improving path $x \to w$ can be found in $\preG$ if there is an improving path in $\postG$. Case 2 shows the case in which $\odeg{x}{\postG} \leq \odeg{u}{\postG}$. In this case an improving path $u \to y$ can be found in $\preG$ if there is an improving path in $\postG$. In both cases, this yields a contradiction to Invariant~\ref{strong_invariant}.}
  \label{fig:2cases}
  \vspace{-0.5cm}
\end{figure}

\begin{proof}[Lemma~\ref{lem:ins}].
W.l.o.g., let $\odeg{u}{\preG} \leq \odeg{v}{\preG}$. From Lemma~\ref{lem:find_path}, we derive that $P = \langle u, \dots, w \rangle$. Further, let $P_f$ be the path \textbf{f}lipped in $\postG$ such that $\odeg{v}{\postG} \geq \odeg{u}{\postG} \geq \odeg{w}{\postG}$. 
Assume there exists an improving path $\improvpath = \langle x, \dots, y \rangle$ in $\postG$. Based on this, we demonstrate that we can find a path in $\preG$ as shown in Figure~\ref{fig:2cases} consisting of sub-paths from $\improvpath$ and $P$, which contradicts Invariant~\ref{strong_invariant}. 
We distinguish the following cases.  

\emph{Case 1:} $\odeg{x}{\postG} > \odeg{u}{\postG}$. Roughly speaking, if this is the case, then we show that there is an improving path from $x$ to $w$ in $\preG$ which is a contradiction to $\preG$ satisfying Invariant 1. 

From Lemma~\ref{lem:one_shared_node} it follows that $P_f$ and $\improvpath$ have at least one common node.
Let $z$ be the first such node in $\improvpath$, resulting in $\improvpath=\langle x,\dots,z,\dots,y \rangle$ and $P=\langle u\dots,z,\dots,w \rangle$. 
Therefore, the sub-paths $P_{x \to z} := \langle x,\dots,z \rangle$ of $\improvpath$ and $P_{z \to w} := \langle z,\dots,w \rangle$ of $P$ both share no edges with the flipped path $P_f$ and $(u,v) \notin P_{x \to z}$. 

In the case of $z \neq u$ or $(u,v) \notin P$, it also follows that $(u,v) \notin P_{z\to w}$.  Therefore, $P_{x \to z}$ and $P_{z\to w}$ already exist in $\preG$. This implies, that there is an improving path $\langle x,\dots,z,\dots,w \rangle$ in $\preG$ with $\odeg{x}{\preG} = \odeg{x}{\postG} > \odeg{u}{\postG} \geq \odeg{w}{\postG} = \odeg{w}{\preG} + 1$, which contradicts Invariant~\ref{strong_invariant}.

If $z = u$ and $(u,v) \in P$, then $(u,v) \notin \improvpath$ because $(u,v)$ got flipped. Therefore, $\improvpath$ has to contain $w$ or share an edge with $P_f$ according to our previous assumption. This implies there is at least one other node apart from~$z(=u)$, which is in $\improvpath$ and $P_f$, as $w \in P_f$. Let $\tilde{z} \neq z$ be the first such node, resulting in $\improvpath=\langle x,\dots,z,\dots,\tilde{z},\dots,y \rangle$ and $P=\langle u,v,\dots,\tilde{z},\dots,w \rangle$. Therefore, the sub-paths $P_{x\to \tilde z} := \langle x,\dots,\tilde{z} \rangle$ of $\improvpath$ and $P_{\tilde z \to w} := \langle \tilde{z},\dots,w \rangle$ of $P$ both share no edges with $P_f$ and $(u,v) \notin P_{x \to \tilde z}$, $(u,v) \notin P_{\tilde z \to w}$. It follows there is a path $\langle x,\dots,\tilde{z},\dots,w \rangle$ in $\preG$ with again $\odeg{x}{\preG} > \odeg{w}{\preG} + 1$, which contradicts Invariant~\ref{strong_invariant}. Therefore, $\odeg{x}{\postG} > \odeg{u}{\postG}$ is not possible.

\emph{Case 2:} $\odeg{x}{\postG} \leq \odeg{u}{\postG}$. Roughly speaking, in this case we can construct an improving path from $u$ to $y$ or one from $v$ to $y$ in $\preG$ which yields a contradiction to $\preG$ satisfying Invariant~\ref{strong_invariant}.
        
According to Lemma~\ref{lem:one_shared_node}, the improving path $\improvpath$ in $\postG$ and the improving path $P$  in $\preG$ share at least one node. Let $z$ be the last such node in $\improvpath$,
resulting in $\improvpath=\langle x,\dots,z,\dots,y \rangle$ and $P=\langle u,\dots,z,\dots,w \rangle$. Therefore, the sub-path $P_{z \to y} := \langle z,\dots,y \rangle$ of the improving path $\improvpath$ does not share edges with $P_f$ (the flipped path in $\postG$).
 Furthermore, $v$ is not part of $P_{z \to y}$, because otherwise there is a sub-path $P_{v \to y} := \langle v,\dots,y \rangle$ of $P_{z \to y}$, which also does not share any edges with $P_f$ and $(u,v) \notin P_{z \to y}$, therefore contradicting Invariant~\ref{strong_invariant} since $\odeg{v}{\preG} = \odeg{v}{\postG} \geq \odeg{u}{\postG} \geq \odeg{x}{\postG} \geq \odeg{y}{\postG} + 2 \geq \odeg{y}{\preG} + 2$.
This implies that $(u,v) \notin P_{z\to y}$ and further that $P_{z\to y}$ exists in $\preG$.

In the case of $(u,v) \in P$, $P=\langle u,v,\dots,z,\dots,w \rangle$ and the sub-path $P_{v \to z } := \langle v,\dots,z \rangle$ of $P$ does not share edges with $P_f$ and $(u,v) \notin P_{v \to z}$.
It follows that there is a path $\langle v,\dots,z,\dots,y \rangle$ in $\preG$ with $\odeg{v}{\preG} \geq \odeg{y}{\preG}+2$ again, which contradicts Invariant~\ref{strong_invariant}.

If $(u,v) \notin P$, then the sub-path $P_{u \to z} := \langle u,\dots,z \rangle$ of $P$ also shares no edges with $P_f$ and $(u,v) \notin P_{u \to z}$.
It follows that there is a path $\langle u,\dots,z,\dots,y \rangle$ in $\preG$ with \hbox{$\odeg{u}{\preG} = \odeg{u}{\postG} \geq \odeg{x}{\postG} \geq$} \hbox{$\odeg{y}{\postG} + 2 \geq \odeg{y}{\preG} + 2$}, which is a contradiction to Invariant~\ref{strong_invariant}.
 
Since $\odeg{x}{\postG} \leq \odeg{u}{\postG}$ is also not possible, we demonstrated that there can not be an improving path in $\postG$ after the update operation has been performed.
\end{proof}

\begin{lemma}\label{lem:del}
Let $\preG$ satisfy Invariant~\ref{strong_invariant}, $\{u,v\}$ be the deleted edge and $P$ be an improving path found by \func{FindAndFlipPathRev} in Algorithm~\ref{alg:strong}. Then, there cannot be an improving path $\improvpath$ in $\postG$.
\end{lemma}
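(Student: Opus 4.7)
The plan is to mirror the argument used for the insertion case (Lemma~\ref{lem:ins}), exploiting the symmetry that \func{FindAndFlipPathRev} operates on the reverse orientation of $\midG$. I would assume for contradiction that there is an improving path $\improvpath = \langle x, \dots, y \rangle$ in $\postG$. W.l.o.g., $\{u,v\}$ is oriented $u \to v$ in $\preG$, so by Lemma~\ref{lem:find_path} the path found by \func{FindAndFlipPathRev} is $P = \langle w, \dots, u \rangle$ for some vertex $w$. The key observation I would rely on is that, in moving from $\preG$ to $\postG$, the edge $(u,v)$ is first removed (giving $\odeg{u}{\midG} = \odeg{u}{\preG} - 1$) and then $P$ is flipped (raising $u$'s out-degree by one and lowering $w$'s out-degree by one), so only $w$ experiences a net change: $\odeg{w}{\postG} = \odeg{w}{\preG} - 1$, while $\odeg{u}{\postG} = \odeg{u}{\preG}$ and every other vertex is unchanged.

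By Lemma~\ref{lem:one_shared_node}, $\improvpath$ shares at least one node with $P_f$, and I would split into two cases based on the relation between $\odeg{y}{\postG}$ and $\odeg{u}{\postG}$, dual to the two cases in the insertion proof. In Case~1, $\odeg{y}{\postG} < \odeg{u}{\postG}$: let $z$ be the last node of $\improvpath$ lying on $P_f$, so the suffix $P_{z \to y}$ of $\improvpath$ shares no edges with $P_f$ and cannot contain $(u,v)$, which is absent from $\postG$. This suffix therefore also exists in $\preG$ with the same orientation, and concatenating it with the prefix $P_{w \to z}$ of $P$ (which lies in $\preG$ because $\midG$ is simply $\preG$ with $(u,v)$ removed and $P$ avoids $(u,v)$) yields a walk from $w$ to $y$ in $\preG$. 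A brief degree computation using $\odeg{w}{\preG} \geq \odeg{u}{\preG} + 1$ (from $P$ being improving in $\midG$) and $\odeg{y}{\preG} \leq \odeg{u}{\preG} - 1$ produces $\odeg{w}{\preG} \geq \odeg{y}{\preG} + 2$, so the walk contains a simple improving path in $\preG$ and contradicts Invariant~\ref{strong_invariant}.

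In Case~2, $\odeg{y}{\postG} \geq \odeg{u}{\postG}$, the improving condition on $\improvpath$ forces $\odeg{x}{\postG} > \odeg{u}{\postG} + 1$. I would then take $z$ to be the first node of $\improvpath$ on $P_f$, noting that the prefix $P_{x \to z}$ of $\improvpath$ again exists in $\preG$ by the same reasoning as above, and combine it with the suffix $P_{z \to u}$ of $P$ to obtain a walk from $x$ to $u$ in $\preG$ satisfying $\odeg{x}{\preG} \geq \odeg{u}{\preG} + 2$, once more contradicting Invariant~\ref{strong_invariant}.

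The main obstacle I anticipate is the bookkeeping around degenerate configurations where $x$, $y$, or $z$ coincide with $w$ or $u$. These cases are, however, strictly simpler than in Lemma~\ref{lem:ins}, because the deleted edge $(u,v)$ lies neither on $P$ (whose last edge is incoming to $u$) nor on $\improvpath$ (since $(u,v) \notin \postG$); hence none of the secondary sub-cases from the insertion proof involving $(u,v) \in P$ need to be reproduced. Each remaining boundary case can be resolved by inspecting the out-degrees at the conflicting endpoint directly, for instance noting that $x = u$ would already yield $\odeg{u}{\postG} > \odeg{u}{\postG} + 1$, and that $y = w$ in Case~1 would immediately conflict with $\odeg{w}{\preG} \geq \odeg{u}{\preG} + 1$.
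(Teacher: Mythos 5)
Your proof is correct and follows essentially the same strategy as the paper's: after invoking Lemmas~\ref{lem:find_path} and~\ref{lem:one_shared_node}, you split into two cases and, in each, splice a subpath of $\improvpath$ with a subpath of $P$ (resp.\ $P_f$) to obtain a directed walk in $\preG$ that violates Invariant~\ref{strong_invariant}. The only difference is cosmetic: you branch on $\odeg{y}{\postG}$ versus $\odeg{u}{\postG}$ whereas the paper branches on $\odeg{x}{\postG}$ versus $\odeg{w}{\postG}+1$, but both dichotomies produce exactly the paper's two constructions $\langle x,\dots,z,\dots,u\rangle$ and $\langle w,\dots,z,\dots,y\rangle$, and your correct observation that $(u,v)$ can lie on neither $P$ nor $\improvpath$ is precisely why this case, like the paper's, avoids the extra subcases of Lemma~\ref{lem:ins}.
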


\begin{proof}[Lemma~\ref{lem:del}].
W.l.o.g., let $\{u,v\}$ be oriented from $u$ in $\preG$. From Lemma~\ref{lem:find_path}, we derive that $P = \langle w, \dots, u \rangle$. Further, let $P_f$ be the path flipped in $\postG$ such that $\odeg{w}{\postG} \geq \odeg{u}{\postG}$. Assume there exists an improving path $\improvpath = \langle x, \dots, y \rangle$ in $\postG$. Based on this, we demonstrate that we can find a path in $\preG$ consisting of sub-paths from $\improvpath$ and $P_f$, which contradicts Invariant~\ref{strong_invariant}. As before we distinguish \hbox{the following cases.}

\emph{Case 1:} $\odeg{x}{\postG} > \odeg{w}{\postG} + 1$. From Lemma~\ref{lem:one_shared_node} it follows that $P_f$ and $\improvpath$ have at least one common node.
Let $z$ be the first such vertex in $\improvpath$, resulting in $\improvpath = \langle x,\dots,z,\dots,y \rangle$ and $P = \langle w,\dots,z,\dots,u \rangle$. Therefore, the sub-paths $P_{x \to z} := \langle x,\dots,z \rangle$ of $\improvpath$ and $P_{z \to u} := \langle z,\dots,u \rangle$ of $P$ both share no edges with $P_f$. This implies that there is a path $\langle x,\dots,z,\dots,u \rangle$ in $\preG$ with $\odeg{x}{\preG} = \odeg{x}{\postG} > \odeg{w}{\postG} + 1 \geq  \odeg{u}{\postG} + 1 = \odeg{u}{\preG} + 1$, which is an improving path in $\preG$ and a contradiction to Invariant~\ref{strong_invariant}. Therefore, $\odeg{x}{\postG} > \odeg{w}{\postG} + 1$ is not possible.

\emph{Case 2:} $\odeg{x}{\postG} \leq \odeg{w}{\postG} + 1$. Since $\improvpath$ and $P$ share nodes according to Lemma~\ref{lem:one_shared_node}, let $z$ be the last such node in $\improvpath$,
resulting in $\improvpath = \langle x,\dots,z,\dots,y \rangle$ and $P = \langle w,\dots,z,\dots,u \rangle$. Therefore, the sub-paths $P_{z \to y} = \langle z,\dots,y \rangle$ of $\improvpath$ and $P_{w \to z} := \langle w,\dots,z \rangle$ of $P$ both share no edges with $P_f$. This implies that there is a path $\langle w,\dots,z,\dots,y \rangle$ in $\preG$ with $\odeg{w}{\preG} = \odeg{w}{\postG} + 1 \geq \odeg{x}{\postG} \geq \odeg{y}{\postG} + 2 = \odeg{y}{\preG} + 2$, which is also a contradiction to Invariant~\ref{strong_invariant}. Since $\odeg{x}{\postG} \leq \odeg{w}{\postG} + 1$ is also not possible, we demonstrated that there can not be an improving path in $\postG$.
\end{proof}

We can now show that Algorithm~\ref{alg:strong} maintains Invariant~\ref{strong_invariant}. 

\begin{theorem}\label{thm:invariant_holds}
Let $\preG$ be an orientation satisfying Invariant~\ref{strong_invariant}. After an update operation and applying the update procedures of \strong (see Algorithm~\ref{alg:strong}), 
Invariant~\ref{strong_invariant} is satisfied for $\postG$.
\end{theorem}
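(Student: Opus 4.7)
The plan is to derive the theorem almost entirely from the preparatory lemmas, treating it as an assembly step rather than a fresh argument. The only real content is a case split on the type of update (insertion versus deletion), combined with a sub-split on whether $\midG$ already satisfies Invariant~\ref{strong_invariant} or not.

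I would begin by fixing an arbitrary update and letting $\midG$ denote the orientation obtained from $\preG$ after the edge modification but before any reorientation. There are two possibilities. If Invariant~\ref{strong_invariant} already holds for $\midG$, then by construction of \func{FindAndFlipPath} and \func{FindAndFlipPathRev} (which only flip edges when an improving path is discovered and whose traversal rules ensure that the sequence of nodes they visit forms an improving path), neither routine performs any flip, so $\postG = \midG$ and we are done. Otherwise, Invariant~\ref{strong_invariant} fails in $\midG$, and Lemma~\ref{lem:find_path} applies: the routine invoked by \strong\ (depending on whether the update is an insertion or a deletion) finds an improving path $P$ starting at $u$ (insertion) or ending at $u$ (deletion), which Algorithm~\ref{alg:strong} then flips to produce $\postG$.

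At this point, the conclusion in the insertion case is a direct invocation of Lemma~\ref{lem:ins}: the hypotheses of that lemma are exactly that $\preG$ satisfies Invariant~\ref{strong_invariant}, that $\{u,v\}$ is the inserted edge, and that $P$ is the improving path that \func{FindAndFlipPath} has found, and its conclusion is precisely that no improving path $\improvpath$ exists in $\postG$, which is the contrapositive of Invariant~\ref{strong_invariant} holding. Symmetrically, in the deletion case, Lemma~\ref{lem:del} yields the same conclusion using \func{FindAndFlipPathRev}. Combining both cases gives the theorem.

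I do not expect any real obstacle: the difficult work of showing that a single flip suffices has already been carried out in Lemmas~\ref{lem:find_path}, \ref{lem:one_shared_node}, \ref{lem:ins}, and \ref{lem:del}. The only subtlety worth spelling out explicitly is the degenerate case where \func{FindAndFlipPath} (or its reverse variant) returns without flipping anything; here one should note that this occurs exactly when no improving path starting (respectively ending) at $u$ exists in $\midG$, and since the only vertex whose out-degree changes in going from $\preG$ to $\midG$ is $u$ itself, any improving path in $\midG$ would have to touch $u$, so the absence of such a path means Invariant~\ref{strong_invariant} is preserved. Wrapping these observations together in a two- or three-line case analysis completes the proof.
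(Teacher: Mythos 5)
Your proposal is correct and follows essentially the same route as the paper's proof: in both cases the theorem is an assembly step that invokes Lemma~\ref{lem:find_path} to obtain the improving path (when one is needed), and then Lemma~\ref{lem:ins} for insertions and Lemma~\ref{lem:del} for deletions to conclude that no improving path remains in $\postG$. The only difference is cosmetic: you spell out the degenerate branch (Invariant~\ref{strong_invariant} already holding in $\midG$, so no flip occurs) as a separate case, whereas the paper folds it into the phrase ``if the orientation is not already satisfying Invariant~\ref{strong_invariant}.''
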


\begin{proof}[Theorem~\ref{thm:invariant_holds}].
Using Lemma~\ref{lem:find_path}, we always find and flip one improving path $P$ if the orientation is not already satisfying Invariant~\ref{strong_invariant}. In the case of insertions, Lemma~\ref{lem:ins} now implies that there can be no further improving path in $\postG$. Analogously, in the case of deletions, we derive from Lemma~\ref{lem:del} that there can be no further improving path in $\postG$. Thus, Invariant~\ref{strong_invariant} is satisfied for $\postG$ and Theorem~\ref{thm:optimality_invariant} implies that $\postG$ is an optimal edge orientation.
\end{proof}

This concludes the correctness proof of our algorithm \strong. We have demonstrated that the algorithm maintains Invariant~\ref{strong_invariant} for both edge deletions and insertions, as established in Theorem~\ref{thm:invariant_holds}. Consequently, it computes optimal solutions, as shown in Theorem~\ref{thm:optimality_invariant}.

\subsection{The Algorithm \imp.}
We now propose an improved algorithm, which maintains a less strict version of Invariant~\ref{strong_invariant}. The specific pseudocode for this algorithm is provided in Algorithm~\ref{improved_invariant}, and the new invariant is defined as follows.
\begin{invariant}\label{improved_invariant}
There is no improving path for 
any vertex with maximum out-degree.
\end{invariant}
Here, we maintain the maximum out-degree $\Delta$ and the number of peak vertices $m_c$ as global variables. In the following we explain the different \hbox{update procedures in detail.}

\newcommand{\onePathImproved}{one}
\newcommand{\allPathsImproved}{all}
\begin{algorithm}[bt!]
\caption{Improved Fully Dynamic Algorithm}
\label{alg:dyn_opt_ins}
\label{alg:dyn_opt_solve}
\begin{algorithmic}[1]
    \State global variables: \\{$\Delta$} $= 0$ \Comment{max out-degree} \\{$m_c$} $= 0$ \Comment{\# peak vertices} 
  
  \Procedure{Insert}{$u, v, \ori{G}$}
    \State W.l.o.g.: $\odeg{u}{\ori{G}} \leq \odeg{v}{\ori{G}}$
    \State $\textsc{Adj}[$u$] := \textsc{Adj}[u] \cup\{v\}$
    
    \If{$\odeg{u}{\ori{G}} = \Delta$}
      \If{not $\func{FindAndFlipPath}(u, \ori{G})$}
        \State {$m_c\texttt{++}$}
      \EndIf
    \ElsIf{$\odeg{u}{\ori{G}} = {\Delta} + 1$}
      \If{$\func{FindAndFlipPath}(u, \ori{G})$} \label{alg:dyn_opt_ins:step2}
        \State $m_c\texttt{++}$
      \Else
        \State $\Delta\texttt{++}$, $m_c = 1$
      \EndIf
    \EndIf
  \EndProcedure

  \Procedure{Delete}{$u, v, \ori{G}$}
    \State W.l.o.g. edge is directed from $u$ to $v$
    \State $\textsc{Adj}[u] := \textsc{Adj}[u] \setminus \{v\}$
    
    \If{$\odeg{u}{\ori{G}} = \Delta - 1$}
      \State $m_c\texttt{--}$
    \ElsIf{$\odeg{u}{\ori{G}} = \Delta - 2$}
      \If{$\func{FindAndFlipPathRev}(u, \ori{G})$}
        \State $m_c\texttt{--}$
      \EndIf
    \EndIf
    
    \If{${m_c} = 0$} \label{alg:dyn_opt_del:m_c}
      \State ${\Delta}\texttt{--}$
      \State \textsc{TightenOutdegree}$(\ori{G})$ 
      \label{alg:dyn_opt_del:step5}
    \EndIf
  \EndProcedure
\Procedure{TightenOutdegree}{$\ori{G}$}
\State ${\onePathImproved} \gets $ True \Comment{one path improved}
\While{\onePathImproved}
    \State ${\allPathsImproved} \gets $ True \Comment{all paths improved}
    \State ${\onePathImproved} \gets $ False; $m_c = 0$
    \For{$v\in V$ with $\odeg{v}{\ori{G}}=\Delta$}
        \If{\textsc{FindAndFlipPath}($v,\ori{G}$)}
            \State ${\onePathImproved} \gets $ True
        \Else
            \State ${\allPathsImproved} \gets $ False; $m_c\texttt{++}$
        \EndIf
    \EndFor
    \If{\allPathsImproved}
        \State $\Delta\texttt{--}$
    \EndIf
\EndWhile
\EndProcedure
\end{algorithmic}
\end{algorithm}

\subparagraph{Insertion.}
First, the algorithm inserts an edge $e=\{u,v\}$ preliminarily with an orientation $(u,v)$ such that $\odeg{u}{\preG} \leq \odeg{v}{\preG}$. If this node subsequently has an out-degree less than {$\Delta$}, the function terminates immediately, as Invariant~\ref{improved_invariant} already holds, since~$u$ has been a sink vertex before insertion and any improving path using the newly inserted edge would also yield an improving path stopping at $u$.

If the out-degree of $u$ is equal to $\Delta$, it attempts to find and flip an augmenting path starting from $u$ as this is now a new node with maximum out-degree. If this attempt fails, it increments the counter $m_c$. 

If $\odeg{u}{\postG}=\Delta+1$, the algorithm also attempts to find and flip an augmenting path. If successful, it increments $m_c$ as the sink node of the path has to be a new node with maximum out-degree. Otherwise, it increments $\Delta$ by one and resets $m_c$ to one as there is now one node with a new maximum out-degree $\Delta+1$. For all cases, the algorithm requires at most one DFS to find an augmenting path and preserves Invariant~\ref{improved_invariant}.
\subparagraph{Deletion.}
The deletion operation of Algorithm~\ref{alg:dyn_opt_solve} is initiated by removing the edge $e=\{u,v\}$, oriented from $u$ to $v$. If the out-degree of $u$ is subsequently less than ${\Delta} - 2$, then the node has been a sink node before and has not been reachable by a peak node because of Invariant~\ref{improved_invariant}.
Thus, no further search is needed to maintain optimality.

In the case $\odeg{u}{\midG}= {\Delta} - 1$, the out-degree of a peak node has been decreased by one, and  ${m_c}$ needs to be decremented. 
If the out-degree of $u$ after removal is $\Delta - 2$, the deletion of the edge has created a new sink vertex. Thus, we perform a backward search to try to find a path $\langle y,\dots u\rangle$ with $y$ being a peak vertex. If we find such a path, the corresponding edges are flipped. Successful identification and subsequent path flipping from a peak node also lead to a decrement of ${m_c}$. 
If the count of peak nodes decreases to zero, recomputation of a minimal $m_c$ is necessary. 
This is done by exhaustively searching for and flipping improving paths for each new peak vertex in \textsc{TightenOutdegree}. 
More precisely, it repeatedly attempts to find and flip paths for nodes with a current max out-degree $\Delta$. If all paths are improved in an iteration, $\Delta$ is decremented. The process continues until no more paths can be improved.

\subsubsection{Proof of Correctness.}
The implied optimality under the invariant remains unchanged. This can be shown analogously to Theorem~\ref{thm:optimality_invariant} when using Invariant~\ref{improved_invariant}.
Furthermore, we now demonstrate again that this new invariant is preserved when the procedures in Algorithm~\ref{alg:dyn_opt_solve} are called.
Accordingly, the following lemmas show that once an improving path has been found, there can not be another improving path from a peak node. 
We first address the insertions and then proceed with deletions.

\begin{lemma}\label{lem:improved_insertion}
Let $\preG$ satisfy Invariant~\ref{improved_invariant} and $\{u,v\}$ be the inserted edge, then Invariant~\ref{improved_invariant} holds for $\postG$ after applying \func{Insert}$(u,v,\preG)$ from Algorithm~\ref{alg:dyn_opt_solve}.
\end{lemma}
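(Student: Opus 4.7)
My plan is to prove Lemma~\ref{lem:improved_insertion} by case analysis on the branches of the \func{Insert} procedure of Algorithm~\ref{alg:dyn_opt_solve}, which are determined by the value of $\odeg{u}{\midG}$, \ie the out-degree of $u$ immediately after inserting $(u,v)$ but before any flipping. In each branch I identify the peak set of $\postG$ and, for every vertex $p$ in it, show that no improving path starts at $p$ by lifting a hypothetical such path back to $\preG$ and invoking Invariant~\ref{improved_invariant} for $\preG$ to derive a contradiction.

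The no-flip sub-branches are handled by short lifting arguments. If $\odeg{u}{\midG}<\Delta$, the peak set of $\postG$ equals that of $\preG$, and any new improving path from a peak $p$ in $\postG$ would have to use the fresh edge $(u,v)$; truncating such a path at $u$ yields an improving path from a peak to $u$ in $\preG$, since $\odeg{u}{\preG}\leq\Delta-2$, contradicting Invariant~\ref{improved_invariant} for $\preG$. In the sub-branches where $\odeg{u}{\midG}\in\{\Delta,\Delta+1\}$ but \func{FindAndFlipPath}$(u,\midG)$ fails, I would rely on the same depth-first-search completeness used in Lemma~\ref{lem:find_path} to rule out improving paths starting at $u$ in $\midG=\postG$, and use analogous lifting arguments for the remaining peaks, whose out-degrees are inherited unchanged from $\preG$.

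The substantive case is when \func{FindAndFlipPath}$(u,\midG)$ finds a path $P=\langle u,\ldots,w\rangle$ and flips it. The peak set of $\postG$ is obtained from that of $\preG$ by possibly inserting $u$ (only when $\odeg{u}{\midG}=\Delta+1$) and possibly inserting $w$, while the remaining peaks are unchanged because no interior vertex of $P$ changes out-degree. Assuming for contradiction an improving path $\improvpath=\langle p,\ldots,y\rangle$ in $\postG$ with $p$ a peak of $\postG$, I follow the strategy of Lemmas~\ref{lem:one_shared_node} and~\ref{lem:ins}: $\improvpath$ and the flipped path $P_f$ must share a vertex $z$, and splicing an appropriate sub-path of $\improvpath$ with a sub-path of $P$ at $z$ produces a candidate improving path in $\preG$. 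I would then distinguish the sub-cases $\odeg{p}{\postG}>\odeg{u}{\postG}$ and $\odeg{p}{\postG}\leq\odeg{u}{\postG}$, exactly paralleling the two-case split in the proof of Lemma~\ref{lem:ins}.

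The main difficulty is to make sure that in the flip cases the lifted improving path in $\preG$ actually starts at a peak of $\preG$, not merely at a vertex of large-ish out-degree. Lemma~\ref{lem:ins} sidestepped this because Invariant~\ref{strong_invariant} forbids improving paths from \emph{any} vertex; here the lifted start-vertex must carry out-degree exactly $\Delta$ in $\preG$ to contradict Invariant~\ref{improved_invariant}. I expect this to follow from combining the fact that $p$ is a peak of $\postG$ with the structural property that all interior vertices of the DFS-found path $P$ have out-degree $\odeg{u}{\midG}-1$ in $\midG$, while only $u$ (and $w$) change out-degree between $\preG$ and $\postG$; in each sub-branch one of the candidate start-vertices ($p$, $u$, or $v$) will turn out to have out-degree $\Delta$ already in $\preG$. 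Assembling the sub-cases then completes the proof in the same style as Theorem~\ref{thm:invariant_holds}.
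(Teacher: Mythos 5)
Your proposal follows essentially the same approach as the paper's proof: a case analysis on $\odeg{u}{\midG}$ aligned with the branches of \func{Insert}, handling the no-flip branches by short lifting arguments combined with the completeness of \func{FindAndFlipPath}, and handling the flip branches by re-running the two sub-cases of Lemma~\ref{lem:ins} restricted to improving paths that start at a peak vertex and with Invariant~\ref{improved_invariant} in place of Invariant~\ref{strong_invariant}. You also correctly single out the one point the paper leaves implicit in its ``applied only to improving paths starting from a peak node'' phrasing --- namely that the lifted path in $\preG$ must be shown to start at a vertex of out-degree exactly $\Delta$ --- and your sketch for closing it (only $u$ and $w$ change out-degree across the update, so in each sub-branch the relevant candidate start vertex among $p$, $u$, $v$ already has out-degree $\Delta$ in $\preG$) is the correct completion.
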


\begin{proof}[Lemma~\ref{lem:improved_insertion}].
W.l.o.g., let $\odeg{u}{\preG} \leq \odeg{v}{\preG}$. 
Since the update depends on $\odeg{u}{\midG}$, we examine each case independently. 
As before, let $\midG$ define the orientation resulting from $\preG$ after the graph update, but before edges have been reoriented.

If $\odeg{u}{\midG} < \Delta$, then Invariant~\ref{improved_invariant} immediately holds for $\postG$ since only the out-degree of $u$ changed, $(u,v)$ got inserted and there can not be a path from a peak node to $u$ in $\preG$ according to Invariant~\ref{improved_invariant}. 

If $\odeg{u}{\midG} = \Delta$ and \func{FindAndFlipPath}$(u,\midG)$ is successful, then $\Delta$ and $m_c$ remain unchanged. It can be shown, analogous to case 1 of the proof for Lemma~\ref{lem:ins}, that Invariant~\ref{improved_invariant} is preserved for $\postG$, except that we apply the lemma only to improving paths starting from a peak node and use Invariant~\ref{improved_invariant} for the proof.

If $\odeg{u}{\midG} = \Delta$ and \func{FindAndFlipPath}$(u,\midG)$ is unsuccessful, then Invariant~\ref{improved_invariant} holds for $\postG$ since only the out-degree of $u$ has changed and $(u,v)$ has been inserted. As $u$ becomes a new peak \hbox{node, $m_c$ is increased.}

If $\odeg{u}{\midG} = \Delta + 1$ and \func{FindAndFlipPath}$(u,\midG)$ successfully finds an improving path $P = \langle u, \dots, w \rangle$, then there is a path from either $u$ or $v$ to $w$ in $\preG$, which are both peak nodes. Therefore, $m_c$ increases, as Invariant~\ref{improved_invariant} implies that $\odeg{w}{\postG} = \Delta$. Similar to case 2 of the proof for Lemma~\ref{lem:ins}, but applied only to improving paths starting from a peak node and using Invariant~\ref{improved_invariant}, it follows that Invariant~\ref{improved_invariant} \hbox{is preserved for $\postG$.}

If $\odeg{u}{\midG} = \Delta + 1$ and \func{FindAndFlipPath}$(u,\midG)$ is unsuccessful, then $u$ is the only peak node and Invariant~\ref{improved_invariant} holds for $\postG$. Consequently, $\Delta$ is increased and $m_c$ set to $1$. Thus Invariant~\ref{improved_invariant} is maintained for all \hbox{occurring cases of insertions.}
\end{proof}

\begin{lemma}\label{lem:improved_deletion}
Let $\preG$ satisfy Invariant~\ref{improved_invariant} and $\{u,v\}$ be the deleted edge, then Invariant~\ref{improved_invariant} holds for $\postG$ after applying \func{Delete}$(u,v,\preG)$ from Algorithm~\ref{alg:dyn_opt_solve}. 
\end{lemma}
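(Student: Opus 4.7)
The plan is to mirror the structure of Lemma~\ref{lem:improved_insertion}: split on $\odeg{u}{\midG}$, discharge the bookkeeping-only branches by direct inspection, adapt the path-splicing of Lemma~\ref{lem:del} to the branch that actually invokes \func{FindAndFlipPathRev}, and finally verify \textsc{TightenOutdegree}. Throughout I would only consider improving paths starting at a peak, since Invariant~\ref{improved_invariant} rules out exactly these.

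First I would dispose of the branches where no path search is performed. When $\odeg{u}{\midG} < \Delta-2$, vertex $u$ was already a sink in $\preG$, the peak set and all peak out-degrees are untouched, and directed paths ending at $u$ in $\midG$ are exactly those in $\preG$; hence any peak-to-$y$ improving path in $\midG$ would have already existed in $\preG$ (or is precisely one blocked by the removed edge $(u,v)$), and Invariant~\ref{improved_invariant} transfers. When $\odeg{u}{\midG} = \Delta-1$, the peak $u$ simply leaves the peak set, but for every remaining peak $p$ the improvement condition $\odeg{p}{\midG}>\odeg{u}{\midG}+1$ would demand $\Delta>\Delta$, and for every target $y\neq u$ nothing has changed; the decrement of $m_c$ correctly reflects the new peak count.

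The substantive branch is $\odeg{u}{\midG} = \Delta-2$, in which $u$ turns into a new sink. If \func{FindAndFlipPathRev}$(u,\midG)$ fails, there is no directed path from any peak to $u$ in $\midG$, and the previous paragraph's reasoning rules out improving paths to targets other than $u$. If the search succeeds and flips a path $P=\langle w,\dots,u\rangle$ from a peak $w$, I would replay the two-case argument of Lemma~\ref{lem:del} under the weaker hypothesis of Invariant~\ref{improved_invariant}. Because the start $x$ of any hypothetical improving path $\improvpath$ in $\postG$ must be a peak, we have $\odeg{x}{\postG}=\Delta=\odeg{w}{\postG}+1$; this immediately collapses Case~1 of Lemma~\ref{lem:del} (which required the strict inequality $\odeg{x}{\postG}>\odeg{w}{\postG}+1$), and Case~2 goes through verbatim once we note that $w$ was a peak in $\preG$, so the spliced path $\langle w,\dots,z,\dots,y\rangle$ reconstructed in $\preG$ is itself a peak-started improving path and contradicts Invariant~\ref{improved_invariant} of $\preG$.

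The most delicate step is the tail invoked when $m_c$ drops to zero, in which $\Delta$ is decremented and \textsc{TightenOutdegree} is called. I would argue correctness in two pieces. Termination: each successful \func{FindAndFlipPath} strictly decreases the number of current peaks, intermediate vertices along any flipped path keep their out-degree, and the sink endpoint rises only to $\Delta-1$, so no new peaks appear mid-iteration; between outer iterations, $\Delta$ strictly decreases whenever the loop continues unboundedly. Invariant establishment: the loop exits only after an iteration in which every surviving peak caused \func{FindAndFlipPath} to return \textnormal{false}, which is exactly Invariant~\ref{improved_invariant} for the final value of $\Delta$. The main obstacle I expect here is certifying that inside a single outer iteration, the flip performed for one peak cannot resurrect an improving path for a peak already visited earlier in the same \textbf{for}~loop; I would dispatch this by the sequential analog of Lemma~\ref{lem:ins}, treating each flip as a local update and reapplying the shared-node contradiction to re-establish Invariant~\ref{improved_invariant} for the intermediate orientation after every step.
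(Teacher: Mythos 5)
Your case split on $\odeg{u}{\midG}$ for the $m_c>0$ branches, the bookkeeping arguments, and the reduction to Case~2 of Lemma~\ref{lem:del} (restricted to peak-started paths under Invariant~\ref{improved_invariant}) are exactly what the paper does, and your observation that Case~1 collapses because a peak start forces $\odeg{x}{\postG}=\Delta=\odeg{w}{\postG}+1$ is a correct and welcome clarification that the paper leaves implicit. The real divergence is that the paper explicitly excludes the $m_c=0$ branch from its case analysis (``Accordingly, in the following, we only consider the cases where $m_c>0$''), simply asserting that \textsc{TightenOutdegree} ``regains'' the invariant, whereas you attempt to argue it. Two remarks on that extra step. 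First, your termination claim that ``$\Delta$ strictly decreases whenever the loop continues'' is not literally true of the pseudocode: the outer \textbf{while} continues whenever \emph{some} peak improved, while $\Delta$ only decreases when \emph{all} did. The correct potential is lexicographic: within a fixed $\Delta$ each iteration with at least one flip strictly decreases the peak count and creates no new peaks, so either the count hits zero (and $\Delta$ drops) or an iteration occurs with no flips; hence the loop terminates. Second, the ``resurrection'' worry you flag is in fact moot and does not require replaying Lemma~\ref{lem:ins}: the loop exits only after an iteration in which \emph{every} current peak returned \textnormal{false} with no intermediate flips, and that iteration is itself a direct certificate of Invariant~\ref{improved_invariant} for the final orientation, regardless of what happened to earlier-visited peaks in earlier iterations. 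So your argument is sound but can be simplified, and overall you are proving strictly more than the paper's own terse proof, which is reasonable given that the paper's lemma statement does cover the \textsc{TightenOutdegree} branch.
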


\begin{proof}[Lemma~\ref{lem:improved_deletion}].
W.l.o.g., let $\{u,v\}$ be oriented from $u$ in $\preG$. 
Again, $\midG$ defines the orientation resulting from $\preG$ after the graph update, but before edges have been reoriented.
It is immediate from Invariant~\ref{improved_invariant} that $\odeg{u}{\midG} < \Delta$. Further, if the algorithm reduces $m_c$ to $0$, $\Delta$ decreases, prompting an exhaustive search to regain Invariant~\ref{improved_invariant} and recompute $\delta$, $m_c$. 
Accordingly, in the following, we only consider the cases where $m_c > 0$ in line~\ref{alg:dyn_opt_del:m_c}. Since further actions depend on $\odeg{u}{\midG}$, we examine each remaining case independently. 

If $\odeg{u}{\midG} < \Delta + 2$, then Invariant~\ref{improved_invariant} immediately holds for $\postG$ since only the out-degree of $u$ changed and there can not be a path from a peak node to $u$ in $\preG$ according to Invariant~\ref{improved_invariant}.

If $\odeg{u}{\midG} = \Delta - 2$ and \func{FindAndFlipPathRev}$(u,\midG)$ is successful, then the amount of peak nodes $m_c$ is decreased by one. 
Similar to case 2 of the proof for Lemma~\ref{lem:del}, but applied only to improving paths starting from a peak node and using Invariant~\ref{improved_invariant}, it follows that Invariant~\ref{improved_invariant} is preserved for $\postG$ if $m_c > 0$.

If $\odeg{u}{\midG} = \Delta - 2$ and \func{FindAndFlipPathRev}$(u,\midG)$ is unsuccessful, then Invariant~\ref{improved_invariant} holds for $\postG$ as only the out-degree of $u$ has changed.

If $\odeg{u}{\midG} = \Delta - 1$, then $m_c$ is decreased as $u$ is no longer a peak node. Since $u$ can not be the end of an improving path and only the out-degree of $u$ changed, Invariant~\ref{improved_invariant} holds for $\postG$ if $m_c > 0$. We have thus shown that Invariant~\ref{improved_invariant} is maintained for all occurring cases of deletions.
\end{proof}

This concludes the correctness proof of our algorithm \imp. We have demonstrated that the algorithm maintains Invariant~\ref{improved_invariant} for both edge insertions and deletions. Consequently, it maintains optimal solutions, as is shown analogous to Theorem~\ref{thm:optimality_invariant}.

\subsection{(Worst-Case) Complexity of \textsc{StrongDynOpt} and \textsc{ImprovedDynOpt}.}
The complexity of all insertion procedures is bounded by the path search that is required to restore the respective invariant.
 In general, a path search can be done in $\mathcal{O}(m)$. 
 Since the \textsc{StrongDynOpt} performs a path search on every update, its complexity is $\mathcal{O}(m)$ for both, insertion and deletions.

 The analysis for \textsc{ImprovedDynOpt} requires more attention.
 The insertion case is similar, the most expensive operation is a path search, if a peak vertex is generated by an insertion.
 In the case of a deletion, the most expensive operation is the \textsc{TightenOutdegree} procedure, that is called if the maximum out-degree is to be reduced.
 The complexity of  \textsc{TightenOutdegree} can be bounded by $\mathcal{O}(mn)$.
 However, we can amortize its cost by accounting path searches that have been not run/successful in the insertion case.
 Every improving path found by \textsc{TightenOutdegree} can be attributed to a previous insertion of an edge in its path, that did not require a path search on insertion.

Since all update operations cost amortized $\mathcal{O}(m)$, the total cost of all update operations of the algorithm is $\mathcal{O}(m^2)$ and matches the complexity of the simple algorithm by Venkateswaran~\cite{venkateswaran2004minimizing}.

\begin{figure*}[t]

\vspace*{-.75cm}
\caption{Performance Profiles.}
  \begin{subfigure}[t]{0.3\textwidth}
    \centering
    \includegraphics[width=0.95\textwidth]{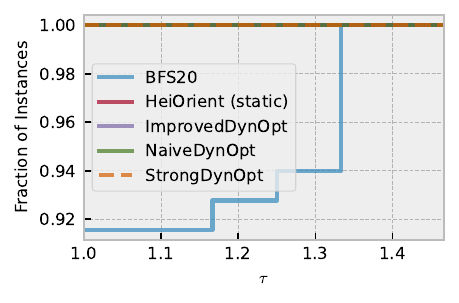}
    \caption{$\Delta$ performance profile comparing our approaches with \textsc{BFS20} by Borowitz~\etal\cite{DBLP:conf/acda/BorowitzG023} and \textsc{HeiOrient}. The \textsc{BFS20} approach is the only inexact solver.}
    \label{fig:results:size}
  \end{subfigure}\hfill
  \begin{subfigure}[t]{0.3\textwidth}
    \centering
  \includegraphics[width=0.95\textwidth]{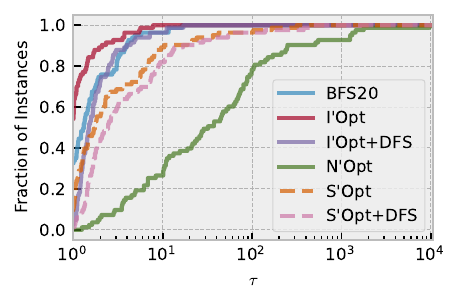}
  \caption{Time performance profile comparing our approaches with \textsc{BFS20} by Borowitz~\etal\cite{DBLP:conf/acda/BorowitzG023}. Algorithm names are abbreviated.}
  \label{fig:results}
\end{subfigure}\hfill
  \begin{subfigure}[t]{0.3\textwidth}
  \centering
  \includegraphics[width=0.95\textwidth]{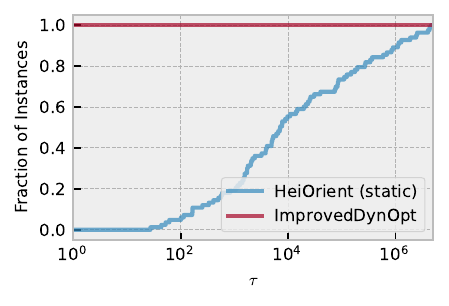}
  \caption{Time performance profile comparing the update time of our fastest approach and one-time solve using \textsc{HeiOrient (static)} on \hbox{the final graphs.}}
  \label{fig:results:fast}
\end{subfigure}
\end{figure*}
\section{Experimental Evaluation}
\label{sec:experiments}
\label{s:exp}
In the following we compare our new algorithms on a benchmark set curated by Borowitz~\etal\cite{DBLP:conf/acda/BorowitzG023}. We also include the state-of-art dynamic inexact algorithm with the highest quality from Borowitz~\etal\cite{DBLP:conf/acda/BorowitzG023} and the static exact algorithm by~Reinstädtler~\etal\cite{reinstaedtler2024engineering}.

\subsection{Optimizations for Path Finding Algorithms.}\label{sec:path_finding}
To enhance the performance of our fully dynamic algorithms, we now present optimizations for the path search. The resulting path-finding algorithms are detailed in Algorithm~\ref{alg:find:path} in the Appendix.
Building upon insights from Reinstädtler~\etal\cite{reinstaedtler2024engineering}, we limit the exploring phase of the path search. Specifically, when searching for an improving path from a node $u$ with out-degree $d$, we propose exploring only vertices with an out-degree of $d - 1$. Traversing over vertices with out-degree $d$ is unnecessary, as Invariant~\ref{strong_invariant}~and~\ref{improved_invariant} ensure that no improving path can be found through these nodes.
Reinstädtler~et~al.~\cite{reinstaedtler2024engineering} also suggest using a depth-first search with an early check, meaning all neighbors are examined first to determine if their out-degree is smaller than $d - 1$ when processing a node.
Other techniques presented in \cite{reinstaedtler2024engineering}, such as the shared visited array, are applicable only to the \textsc{TightenOutdegree} subroutine when using a depth-first search. 
With the shared visited array-technique unsuccessful explored paths are marked visited until no improvement is found. 
 This prevents re-exploring vertices.
Furthermore, we suspect that the length of the improving paths are relatively short.
Hence, we are testing breadth-first search for the path finding routines in Algorithm~\ref{alg:find:path} as an alternative to the \hbox{depth-first search approach.}

\noindent \textbf{\textit{Methodology.}}
We performed our implementations using C++ and compiled them using gcc 9.4 with full optimization turned on (-O3 flag). 
All of our experiments were run on a  machine 
equipped with one AMD EPYC 7702P 64-Core Processor running at 2.0GHz with 256MB L3 Cache and  1TB of main memory.

By default, we perform five repetitions per instance (graph) and execute up to 32 experiments in parallel. The order of experiments was random.
We measure the \emph{total time} taken to compute \emph{all edge insertions and deletions} and when comparing with static algorithms compare the \emph{average update time} (total time divided by number of updates) with the static running time.
Generally, we use the \emph{geometric mean} when averaging over different instances
in order to give every instance a comparable influence on the final result. 
In order to compare different algorithms, we use \emph{performance profiles}~\cite{DBLP:journals/mp/DolanM02}.
These plots relate the objective function size / running time  of all algorithms to the corresponding objective function size / running time produced / consumed by each algorithm.
More precisely, the $y$-axis shows $\#\{\text{objective} \leq \tau * \text{best} \} / \# \text{instances}$, where objective corresponds to
the result of an algorithm on an instance and best refers to the best result of any algorithm shown within the plot.
When we look at running time, the $y$-axis shows $\#\{t \leq \tau * \text{fastest} \} / \# \text{instances}$, where $t$ corresponds to
the time of an algorithm on an instance and fastest refers to the time of the fastest algorithm on that instance.
The parameter $\tau\geq 1$ in this equation is plotted on the $x$-axis.
For each algorithm, this yields a non-decreasing, piecewise constant function.
Thus, if we are interested in the number of instances where an algorithm is the best/fastest, \hbox{we only need to look at $\tau = 1$.}

\begin{table*}
  \centering
  \caption{Average total running time over 5 runs in seconds (lower is better) 
          for a subset of instances of the test set. Full results in Table~\ref{table:results} in the Appendix. BFS20~\cite{DBLP:conf/acda/BorowitzG023} does not solve all instances optimally. The static solver HeiOrient~\cite{reinstaedtler2024engineering} is only solving the final instance. (Fastest optimal dynamic algorithm in \textbf{bold}).}
  \label{tab:sample:instances}

  {\footnotesize
  \begin{tabular}{lS[table-format=5.2]S[table-format=5.2]S[table-format=5.2]S[table-format=5.2]S[table-format=5.2]|S[table-format=5.2]S[table-format=5.2]}
    Instance&{{{\textsc{N'Opt}}}}&{{{\textsc{S'Opt+DFS}}}}&{{{\textsc{S'Opt}}}}&{{{\textsc{I'Opt+DFS}}}}&{{{\textsc{I'Opt}}}}&{{{\textsc{BFS20}}}}&{{{\textsc{HeiOrient}}}}\\
    \midrule

    auto&66266.64&29796.92&14237.42&30229.06&\bfseries 13838.68&1771.35&0.26\\
    citeulike\_ui&4.42&0.10&0.15&0.07&\bfseries 0.07&0.32&0.07\\
    delaunay\_n20&102905.40&30.56&11.19&31.03&\bfseries 10.62&35.90&0.50\\
    dewiki\_clean&12369.42&1163.47&1174.61&209.71&\bfseries 154.39&188.98&2.41\\
    fe\_tooth&372.92&77.84&\bfseries 36.72&73.09&36.91&42.09&0.40\\
    rgg\_n\_2\_17\_s0&36.62&1.94&1.25&0.22&\bfseries 0.16&0.15&0.04\\
    wiki\_simple\_en&45.10&6.39&4.23&2.31&\bfseries 1.71&1.20&0.04\\
  \end{tabular}
  }
  \end{table*}
\begin{table}[t]
\centering
\caption{Geometric mean running time (seconds) for all updates and the relative speed over all 83 instances.}
\label{tab:geo:mean}

{\footnotesize
\begin{tabular}{lr@{\hskip 10mm}r}
  Algorithm &  Time [s] & Rel. Speed \\
          \midrule
          \textsc{ImprovedDynOpt} & 1.15 &1\\
          \textsc{BFS20} (heuristic) & 1.53 &1.32\\
          \textsc{ImprovedDynOpt+DFS} & 1.69 & 1.46\\
          \textsc{StrongDynOpt} & 2.56 & 2.22\\
          \textsc{StrongDynOpt+DFS} & 3.55 & 3.08\\
          \textsc{NaiveDynOpt} & 30.35 &26.32\\
\end{tabular}
}

\end{table}

\noindent \textbf{\textit{Instances.}}
We evaluate our algorithms on a number of large graphs, that were collected by Borowitz~\etal\cite{DBLP:conf/acda/BorowitzG023} in their recent study. 
These graphs are  from various backgrounds~\cite{benchmarksfornetworksanalysis,DBLP:journals/corr/abs-2003-00736,UFsparsematrixcollection,snap,DBLP:conf/www/Kunegis13,konect:unlink,DBLP:journals/jpdc/FunkeLMPSSSL19,kappa}.
The full version of their paper \cite{borowitz2023engineeringfullydynamicdeltaorientation} summarizes the main properties of the~benchmark~set.
     
The graphs are undirected, without self-loops or parallel edges.
Static graphs are converted to be dynamic by starting with an empty graph and inserting all edges in a random order.
In the data set are also real dynamic instances -- most of these instances, however, only feature insertions (with the exception being \texttt{amazon-ratings}, \texttt{movielens10m}, \texttt{dewiki} and \texttt{wiki\_simple\_en}) as there is currently a lack of publicly available instances that also feature deletions. 
Figure~\ref{fig:instances} in the Appendix shows the temporal course of the optimal maximum out-degree  for these instances.
\paragraph{Implementations.} We use the implementation of Borowitz~\etal\cite{DBLP:conf/acda/BorowitzG023} for the \emph{inexact/heuristic} strategy \textsc{BFS20} for which they reported the best quality.
In order to verify the optimality and to compare the running times we are using the static algorithm \textsc{HeiOrient (static)} by Reinstädtler~et~al.~\cite{reinstaedtler2024engineering} on the final static graph after all update operations are executed.
We implemented the naive (\textsc{NaiveDynOpt}), strong (\textsc{StrongDynOpt}), and improved (\textsc{ImprovedDynOpt}) invariant-based algorithms for exact edge orientation using a breadth-first path approach. Additionally, we tested, whether replacing the breadth-first search with an eager depth-first search is feasible, these variants \hbox{are marked with a \textsc{+DFS}.}

\subsection{Quality Comparison.}
We validated that all our solutions are exact by running the algorithm by Reinstädtler~\etal\cite{reinstaedtler2024engineering} on the final graph of the edit sequence.
Figure~\ref{fig:results:size} shows the quality of the \textsc{BFS20} approach in comparison to our exact solvers and the \textsc{HeiOrient} by Reinstädtler~\etal\cite{reinstaedtler2024engineering} on the resulting static graph.
Around~92\% of the instances are solved optimally by \textsc{BFS20}. The remaining instances (\texttt{4elt}, \texttt{delaunay*}, \texttt{fe\_sphere}, \texttt{fe\_pwt}, \texttt{fe\_tooth}) are solved not optimally by \textsc{BFS20}, resulting in  an out-degree up to 33\% worse than the optimal solution.

\subsection{Running Time Comparison.}
\label{exp:overall}
In Figure~\ref{fig:results} we present the performance profile for the running times of our algorithms in comparison to the best inexact competitor \textsc{BFS20} by Borowitz~\etal\cite{DBLP:conf/acda/BorowitzG023}.
The \textsc{ImprovedDynOpt} approach is for 60\% of the instances the fastest approach and dominates the other algorithms in the performance profile.
Table~\ref{tab:geo:mean} reports the geometric mean running times over all instances.
Detailed running times for a representative subset can be found in Table~\ref{tab:sample:instances}, while all results are in the Appendix in Table~\ref{table:results}.
The second fastest approach is the 32\% slower inexact \textsc{BFS20} approach by Borowitz~\etal\cite{DBLP:conf/acda/BorowitzG023}.
 Interestingly, the \textsc{ImprovedDynOpt+DFS} approach has a similar profile to the \textsc{BFS20} approach. %
 The stronger invariant version is again slower, while the naive \textsc{NaiveDynOpt} is naturally the slowest algorithm overall.
 On one instance from online networks (\texttt{t60k}) it needs more than three order magnitude longer than the fastest approach. 
  On the \texttt{delaunay20} instance it requires 5 orders of magnitude more time than the~\textsc{ImprovedDynOpt}~approach.
  However, on the finite element \texttt{bcsstk29} and \texttt{bcsstk30} instances, it is the fastest algorithm by a margin of nearly a quarter. On the other \texttt{bcsstk*} instances it is considerably slower, up to a factor of 64 for the \texttt{bcsstk32} instance. 

 On some instances from finite element background (\texttt{fe\_body}, \texttt{fe\_tooth}, \texttt{wing\_nodal}) the \textsc{StrongDynOpt} approach is faster than all other approaches.
 These instances have an average degree close to the maximum out-degree.
 Citation networks like \texttt{citeulike\_ui} and \texttt{citation*} are solved by \textsc{ImprovedDynOpt} up to four times faster than \textsc{BFS20}, while the related, denser \texttt{coAuthors*} and \texttt{coPapers*} instances are solved the fastest by \textsc{BFS20}.

 In general, the methods employing a depth-first search are slower than the breadth-first approaches.
  Only on two instances (\texttt{add*}) with very low running times across all approaches they are the fastest. 
On two of four instances with deletions (\texttt{simple\_wiki\_en} and \texttt{movielens10m}) the \textsc{BFS20} is faster, while on the \texttt{dewiki\_clean} and \texttt{amazon-ratings} instance the improved \textsc{ImprovedDynOpt} variant is faster. 

\subsection{Comparison with Static Exact Algorithms.}
In Figure~\ref{fig:results:fast} we compare the average update time of our and Borowitz'~\cite{DBLP:conf/acda/BorowitzG023} approaches with a one time solve using the resulting static graph and \textsc{HeiOrient} by Reinstädtler~et~al.~\cite{reinstaedtler2024engineering}.
 The total running times of the dynamic algorithms were normalized by dividing by the number of operations.
 The \textsc{HeiOrient} approach is up to 6 orders of magnitude slower than our approaches. 
 In the geometric mean over all instances HeiOrient is  \numprint{14307.80} times slower in this metric.

\vspace{-0.25cm}
\section{Conclusion}
In this paper we have introduced two  invariant based  algorithms and a naive algorithm for solving the fully dynamic exact $\Delta$-orientation problem in general.
By relaxing the invariants we can provide a faster algorithm.
In experiments, we have shown that our best algorithm \textsc{ImprovedDynOpt} is 32\% faster than the previous \emph{inexact}  state-of-the-art while always maintaining the optimal solution.
The update time per operation is for our algorithm up to 6 orders of magnitude faster than the state-of-the-art static solver.
Future avenues of work include  parallelization, batching of modifications and adjustments to the objective function such that for example the sum of the squared out-degrees is~to~be~optimized.

\section*{Acknowledgements} 
We acknowledge support by DFG grant \hbox{SCHU 2567/3-1}. Moreover, we like to acknowledge Dagstuhl Seminar 22461 on dynamic graph algorithms.

\nprounddigits{0}
{
        \footnotesize
\bibliographystyle{plainnat}
}
\bibliography{paperfixed}
\clearpage
\begin{appendix}
  
  \section{Instances}
  \begin{figure}[H]
    \centering
    \includegraphics[width=0.33\textwidth]{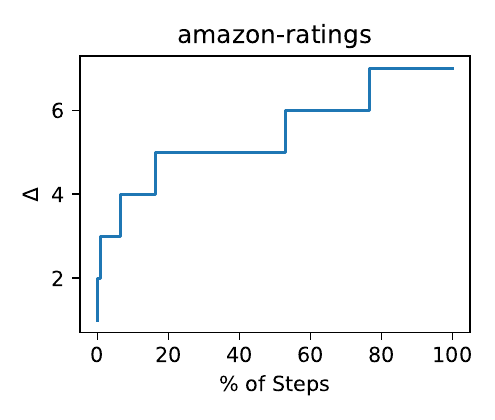}
    \includegraphics[width=0.33\textwidth]{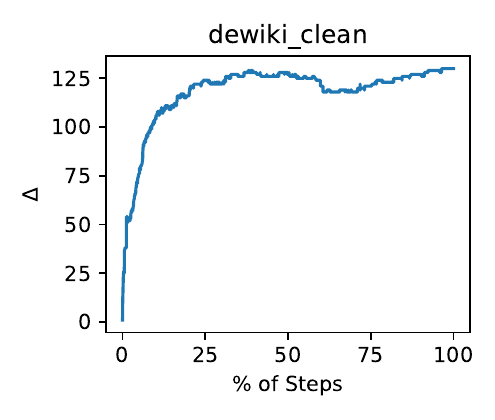}
    \includegraphics[width=0.33\textwidth]{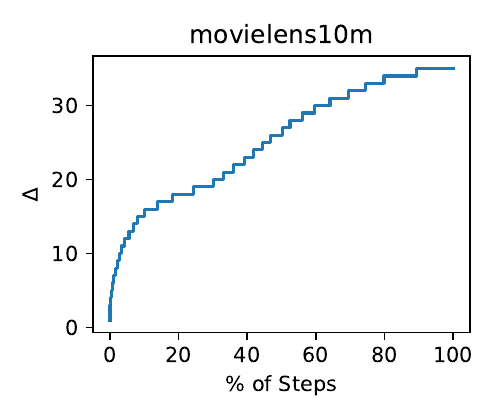}
    \includegraphics[width=0.33\textwidth]{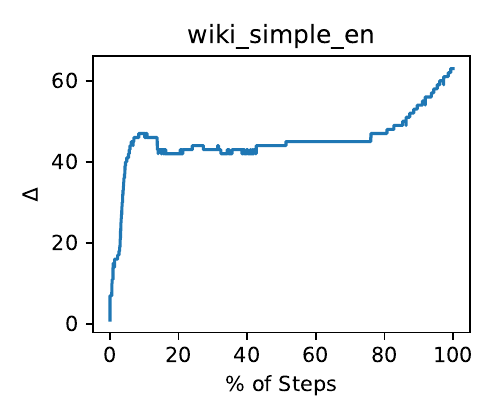}
    \caption{Optimal maximum out-degree ($\Delta$) over time for the fully dynamic instances.}
    \label{fig:instances}
  \end{figure}
  \clearpage
  \onecolumn
\section{Detailed Results}
{
\begin{longtblr}[
caption = {Average total running time over five runs in seconds (lower is better). BFS20~\cite{DBLP:conf/acda/BorowitzG023} is a heuristic algorithm and does not solve all instances optimally. The static solver HeiOrient~\cite{reinstaedtler2024engineering} is only solving the final instance. Algorithm names are abbreviated. The fastest optimal dynamic algorithm is highlighted in \textbf{bold} front.},
  label = {table:results},
 ]{colsep = 1mm,
rowsep = 1mm,fontsize=\tiny,
colspec = {lS[table-format=5.2]S[table-format=5.2]S[table-format=5.2]S[table-format=5.2]S[table-format=5.2]||S[table-format=5.2]S[table-format=5.2]},rowhead =1,  row{even} = {gray9}, row{1} = {font=\scshape}, column{1}={font=\ttfamily}}
&{{{N'Opt}}}&{{{S'Opt+DFS}}}&{{{S'Opt}}}&{{{I'Opt+DFS}}}&{{{I'Opt}}}&{{{BFS20}}}&{{{HeiOrient}}}\\
144&4752.03&4982.90&2551.56&5021.59&\bfseries 2540.59&451.76&0.06\\
3elt&0.31&0.01&\bfseries 0.01&0.01&0.01&0.04&0.00\\
4elt&5.48&0.06&0.04&0.05&\bfseries 0.04&0.15&0.00\\
598a&1493.04&590.08&195.06&529.99&\bfseries 186.96&133.99&0.27\\
PGPgiantcompo&0.05&0.02&0.01&0.01&\bfseries 0.01&0.01&0.00\\
RHG-1m-nodes-10m-e'&4459.00&30.00&22.77&4.32&\bfseries 3.55&10.87&0.90\\
RHG-1m-nodes-20m-e'&16873.14&175.50&129.92&33.01&\bfseries 13.53&222.34&1.66\\
add20&0.01&0.00&0.00&\bfseries 0.00&0.01&0.01&0.00\\
add32&0.04&0.00&0.00&\bfseries 0.00&0.00&0.00&0.00\\
amazon-2008&5843.61&8869.86&3899.14&168.11&\bfseries 23.50&34.90&0.57\\
amazon-ratings&2.73&0.19&0.29&0.14&\bfseries 0.13&0.18&0.16\\
as-22july06&0.09&0.01&0.01&0.01&\bfseries 0.01&0.03&0.00\\
as-skitter&153.71&45.04&34.82&3.99&\bfseries 3.35&15.28&0.40\\
auto&66266.64&29796.92&14237.42&30229.06&\bfseries 13838.68&1771.35&0.26\\
bcsstk29&\bfseries 64.01&102.96&68.94&103.00&71.38&34.11&0.04\\
bcsstk30&331.52&235.76&152.03&178.38&\bfseries 108.87&74.01&1.60\\
bcsstk31&15.19&11.40&8.10&0.51&\bfseries 0.34&0.48&0.04\\
bcsstk32&14.07&103.26&68.73&0.28&\bfseries 0.22&0.35&0.04\\
bcsstk33&\bfseries 46.30&91.08&60.19&90.97&62.33&37.06&0.33\\
brack2&219.35&30.62&17.12&33.44&\bfseries 16.60&26.71&0.35\\
citationCiteseer&177.40&11.83&5.06&4.58&\bfseries 1.67&5.43&0.18\\
citeulike\_ui&4.42&0.10&0.15&0.07&\bfseries 0.07&0.32&0.07\\
cnr-2000&41.91&20.55&15.43&0.92&\bfseries 0.81&1.42&0.11\\
coAuthorsCiteseer&4.24&1.15&0.88&\bfseries 0.17&0.17&0.16&0.06\\
coAuthorsDBLP&5.63&1.40&1.10&0.28&\bfseries 0.28&0.20&0.08\\
coPapersCiteseer&2551.82&971.90&728.58&163.82&\bfseries 121.18&34.11&0.58\\
coPapersDBLP&662.30&516.80&393.50&11.18&\bfseries 9.38&9.32&0.68\\
crack&1.32&0.02&0.02&0.02&\bfseries 0.02&0.07&0.00\\
cs4&6.33&0.02&0.02&0.01&\bfseries 0.01&0.13&0.00\\
cti&6.62&0.98&0.69&0.97&\bfseries 0.65&0.87&0.00\\
data&0.17&0.06&\bfseries 0.05&0.06&0.05&0.07&0.00\\
delaunay\_n16&97.03&0.35&0.22&0.34&\bfseries 0.19&0.79&0.01\\
delaunay\_n17&832.98&1.01&0.76&1.00&\bfseries 0.58&2.27&0.04\\
delaunay\_n20&102905.40&30.56&11.19&31.03&\bfseries 10.62&35.90&0.50\\
dewiki-2013&2533.29&167.08&122.98&25.02&\bfseries 20.55&118.14&2.51\\
dewiki\_clean&12369.42&1163.47&1174.61&209.71&\bfseries 154.39&188.98&2.41\\
dnc-temporalGraph&0.01&0.00&0.00&0.00&\bfseries 0.00&0.00&0.00\\
email-EuAll&0.50&0.29&\bfseries 0.22&0.29&0.22&0.17&0.00\\
enron&5.56&2.04&1.47&1.88&\bfseries 1.40&0.81&0.02\\
eu-2005&656.55&1469.95&1086.25&12.91&\bfseries 10.54&13.49&0.75\\
facebook-wosn-wall&3.55&0.34&0.25&0.17&\bfseries 0.11&0.15&0.01\\
fe\_4elt2&2.50&0.03&\bfseries 0.03&0.03&0.03&0.09&0.00\\
fe\_body&49.60&0.94&\bfseries 0.53&0.94&0.54&0.98&0.01\\
fe\_ocean&1131.13&34.75&23.56&34.88&\bfseries 17.83&13.12&0.01\\
fe\_pwt&45.47&4.32&\bfseries 2.70&4.55&2.90&1.04&0.01\\
fe\_rotor&1163.31&957.36&\bfseries 547.99&887.74&556.19&239.51&0.04\\
fe\_sphere&6.51&0.09&0.07&0.10&\bfseries 0.07&0.19&0.00\\
fe\_tooth&372.92&77.84&\bfseries 36.72&73.09&36.91&42.09&0.40\\
finan512&108.12&4.15&2.08&4.06&\bfseries 1.99&3.02&0.03\\
flickr-growth&22917.82&2746.43&1811.54&1488.92&\bfseries 945.98&838.96&4.60\\
haggle&0.00&0.00&0.00&0.00&\bfseries 0.00&0.00&0.00\\
in-2004&4459.32&1108.66&756.32&468.09&\bfseries 350.28&94.23&0.61\\
lastfm\_band&44.64&16.88&\bfseries 11.57&16.93&12.03&12.72&0.08\\
ljournal-2008&6975.45&2207.74&1491.73&102.76&\bfseries 79.32&40.87&6.91\\
lkml-reply&2.25&0.47&0.33&0.47&\bfseries 0.33&0.36&0.01\\
loc-brightkite\_edges&1.33&0.33&0.26&0.13&\bfseries 0.11&0.12&0.01\\
loc-gowalla\_edges&19.95&2.22&1.70&1.20&\bfseries 0.93&1.36&0.08\\
m14b&13633.64&7607.11&4081.14&7961.83&\bfseries 3949.02&721.81&0.27\\
memplus&0.20&0.07&0.06&0.07&\bfseries 0.06&0.06&0.00\\
movielens10m&24.94&2.66&1.89&2.99&\bfseries 1.80&1.09&0.03\\
munmun\_digg&1.67&0.17&0.12&0.18&\bfseries 0.11&0.17&0.01\\
p2p-Gnutella04&0.82&0.63&0.41&0.63&\bfseries 0.40&0.60&0.00\\
proper\_loans&1228.51&210.34&137.46&313.92&\bfseries 121.28&112.49&0.19\\
rgg\_n\_2\_15\_s0&3.96&0.15&0.12&0.03&\bfseries 0.03&0.03&0.01\\
rgg\_n\_2\_16\_s0&11.43&0.48&0.35&0.06&\bfseries 0.05&0.06&0.01\\
rgg\_n\_2\_17\_s0&36.62&1.94&1.25&0.22&\bfseries 0.16&0.15&0.04\\
soc-Slashdot0902&5.10&6.99&5.26&1.07&\bfseries 0.75&1.34&0.01\\
sociopatterns-infections&0.01&0.00&0.00&0.00&\bfseries 0.00&0.00&0.00\\
s'exchange-s'overflow&54.44&1.02&0.93&1.02&\bfseries 0.62&0.89&0.16\\
t60k&48.77&0.03&0.03&0.03&\bfseries 0.03&0.06&0.00\\
topology&0.34&0.10&0.08&0.09&\bfseries 0.07&0.13&0.01\\
uk&0.15&\bfseries 0.00&0.00&0.00&0.00&0.00&0.00\\
vibrobox&3.69&1.16&0.78&0.68&\bfseries 0.51&0.44&0.01\\
wave&4272.21&3082.34&1556.88&2999.15&\bfseries 1555.95&416.36&0.08\\
web-Google&52.78&5.33&4.17&0.60&\bfseries 0.60&0.73&0.26\\
whitaker3&1.85&0.04&\bfseries 0.03&0.03&0.03&0.08&0.00\\
wiki-Talk&107.46&35.03&25.26&34.39&\bfseries 25.26&17.42&0.11\\
wiki\_simple\_en&45.10&6.39&4.23&2.31&\bfseries 1.71&1.20&0.04\\
wikipedia-growth&6019.03&79.22&60.24&36.54&\bfseries 24.55&66.99&3.10\\
wing&52.53&0.07&0.05&0.07&\bfseries 0.04&0.47&0.01\\
wing\_nodal&7.55&6.38&\bfseries 3.83&6.41&3.90&5.01&0.07\\
wordassociation-2011&2.42&2.48&1.81&2.48&\bfseries 1.81&1.49&0.00\\
youtube-u-growth&963.10&20.40&15.73&17.77&\bfseries 9.29&28.87&1.86\\
\end{longtblr}

}
\end{appendix}

\end{document}